\documentclass[11pt]{article}
\usepackage{geometry}
\usepackage{authblk}
\usepackage{latexsym,amsmath,amsfonts,amssymb,amsbsy,amsopn,amstext,amsthm}
\usepackage{epsfig,epstopdf,graphicx}
\usepackage[mathscr]{eucal}
\usepackage{multicol,makeidx}
\usepackage[monochrome]{xcolor}
\usepackage{bm,mathrsfs}
\usepackage{float}
\usepackage{stackengine}
\usepackage{tablefootnote}
\usepackage{threeparttable}
\usepackage{mathdots}
\usepackage{tabu}
\usepackage{enumerate}
\usepackage[labelformat=simple]{subcaption}

\def\eq{\displaystyle\stackrel\triangle=}
\def\xra{\xrightarrow}

\DeclareMathOperator*{\argmin}{arg\,min}

\newcommand*\mcapinn[2]{\vcenter{\hbox{$\mathsurround=0pt
  \ifx\displaystyle#1\textstyle\else#1\fi\bigcap$}}}

\newcommand*\mcupinn[2]{\vcenter{\hbox{$\mathsurround=0pt
  \ifx\displaystyle#1\textstyle\else#1\fi\bigcup$}}}

\DeclareFontFamily{OT1}{pzc}{}
\DeclareFontShape{OT1}{pzc}{m}{it}{<-> s * [1.200] pzcmi7t}{}
\DeclareMathAlphabet{\mathpzc}{OT1}{pzc}{m}{it}

\graphicspath{{./images/}}

\newtheorem{assum}{Assumption}
\newtheorem{thm}{Theorem}
\newtheorem{prop}{Proposition}
\newtheorem{lem}{Lemma}
\newtheorem{rem}{Remark}

\providecommand{\stackanchor}[2]{\shortstack{#1\\#2}}
\newenvironment{keyword}
{\par\small\noindent\textbf{\textit{Keywords---}}}
{\par}

\begin{document}

	\title{\textbf{Sequentially decoupling estimators for Box-Jenkins model  estimation}}
	\author{Biqiang Mu}

	\affil{Academy of Mathematics and Systems Science\\
		 Chinese Academy of Sciences, Beijing 100190, China\\
	\texttt{bqmu@amss.ac.cn}}

\date{}
\maketitle

	\begin{abstract}                          
		In this paper, we propose {\color{blue}a consistent and asymptotically efficient estimation method for Box–Jenkins (BJ) models that is applicable under both open-loop and closed-loop data conditions, serving as a possible alternative to the weighted null-space fitting  approach.
				The method comprises two stages: an initial sequentially decoupling (SD) estimator, followed by  Gauss–Newton (GN) refinement step.}
		The SD estimator  {\color{blue}is constructed from three sequential least squares (LS) estimators:}
		(i) estimation of a high-order autoregressive model with exogenous inputs (ARX) model;
		(ii) estimation of the BJ model’s dynamic model via an auxiliary output-error (OE) model; and
		(iii) estimation of the noise model of the BJ model using another auxiliary  OE model.
		We establish the consistency of the SD estimator under standard regularity conditions, leveraging the consistency of the underlying LS estimators for both the ARX and OE models.  {\color{blue}Moreover, we show that one-step GN iteration starting from the SD estimator yields an estimator that} is asymptotically equivalent to the prediction error method, provided the ARX model order satisfies a mild growth condition.
		Simulation studies confirm the theoretical properties of the proposed method.

	\end{abstract}

    \begin{keyword}                           
		Box-Jenkins models; Sequentially decoupling estimators; {\color{blue}Gauss–Newton  iterations;}  Consistency; {\color{blue}Asymptotic efficiency.}
	\end{keyword}                             

\section{Introduction}

System identification seeks to establish accurate mathematical models for practical dynamic systems using measured data. Among various model structures, the family of linear models, which describes the linear relationship between input and output, plays a crucial role in system identification due to its simplicity and  importance \cite{Ljung1999,Soderstrom1989}.
In particular, the  BJ model consists of two parametric models characterized by rational functions: the dynamic model and the noise model, and has attracted significant attention from both theoretical and practical perspectives. This model’s ability to capture complex system dynamics and noise structures makes it highly valuable \cite{Ljung1999,Box2015, Pintelon2006J2, Triolo1988}. The BJ model includes several widely used special cases, such as the finite impulse response  model, the ARX model, the  OE  model, and the autoregressive moving average with exogenous inputs  model. Furthermore, more complex models based on the BJ structure have been explored, including linear parameter-varying BJ models \cite{Laurain2010} and jump BJ models \cite{Piga2020}.


Several methods have been developed to estimate the unknown parameters of the four polynomials that define the BJ model. Among them, the prediction error method (PEM) is the most widely used in practice. The conventional time-domain PEM minimizes the sum of squared prediction errors and is particularly attractive due to its asymptotic efficiency under Gaussian noise and its readily available implementation in MATLAB’s System Identification Toolbox \cite{Ljung1999, Ljung2012}.
PEM has also been extended to the frequency domain via maximum likelihood (ML) formulations, enabling BJ model estimation in both open- and closed-loop settings \cite{Ljung1993, Mckelvey2002,Pintelon2006J1}.
However, a major challenge  of the PEM  is the non-convexity of the underlying optimization problem. This often makes it difficult to guarantee convergence to the global optimum, as iterative solvers may become trapped in local minima.

To address {\color{blue}the non-convex optimization problem inherent in the PEM}, the refined instrumental variable (RIV) method introduced in \cite{Young2015} employs an iterative pseudo-linear regression algorithm derived from the first-order optimality conditions of the  ML criterion.
	{\color{blue}Another prominent approach to BJ model estimation falls within the class of model reduction methods \cite{Wahlberg1989}. These are multi-step procedures that share a common first step: estimation of a high-order ARX model via  least squares (LS) estimation. The theoretical foundation of this class of methods rests on the fact that the high-order ARX estimate and its covariance constitute a sufficient statistic for the parameters of the underlying BJ model \cite{Lehmann1998}. The methods differ in their subsequent model reduction steps, which aim to extract the BJ parameters from this nonparametric ARX estimate.}

The Box–Jenkins Steiglitz–McBride (BJSM) method \cite{Zhu2016} {\color{blue}is designed for open-loop data and proceeds in two stages:}
(i) it estimates the dynamic model by applying Steiglitz–McBride (SM) iterations to an output-error (OE) model {\color{blue}constructed from filtered input–output signals derived from  the nonparametric  ARX estimate; and}
(ii) it estimates the noise model by fitting an ARMA model {\color{blue}via a nonconvex optimization procedure.
		Building on this framework, \cite{Everitt2018} proposed the model order reduction Steiglitz–McBride (MORSM) method, which improves upon BJSM in terms of convergence properties. Under open-loop conditions, MORSM achieves consistency and asymptotic efficiency for the dynamic model of Box–Jenkins systems using just one-step SM iteration.
		More recently, the weighted null-space fitting (WNSF) method  \cite{Galrinho2019J1}  was introduced to provide a consistent and asymptotically efficient estimate of both the dynamic and noise components of BJ models, applicable to both open- and closed-loop data.
		WNSF avoids nonconvex optimization and iterative schemes by relying on a multi-step (weighted) LS procedure.
		Owing to its strong theoretical guarantees and computational efficiency, WNSF has been successfully extended to a wide range of identification problems, including:
		BJ models with nonparametric noise models \cite{Galrinho2019J2},
		multi-input multi-output (MIMO) BJ models \cite{Galrinho2018},
		recursive identification of MIMO BJ systems \cite{Fang2021},
		dynamic network identification \cite{Galrinho2018, Fonken2022, Kivits2023}, and
		subspace identification \cite{He2024}.

	Although the WNSF method enjoys asymptotic efficiency guarantees, we found from simulations that its estimation accuracy degrades significantly under low-pass input excitation. We conjecture that this degradation may arise from the weight matrix in the final weighted least squares step, which becomes severely ill-conditioned under low-pass input excitation. 
		To address this limitation, we propose an alternative method, denoted by SDGN, for  BJ model estimation that is applicable to both open-loop and closed-loop data. The SDGN method consists of two stages:
		(i) an SD estimator, which provides a consistent (though not necessarily efficient) initial estimate; and
		(ii) a GN refinement, which elevates this initial estimate to asymptotic efficiency.
		A key advantage of this two-step strategy is that it simplifies the design of the initial estimator: consistency alone is sufficient, and  its precise convergence rate is not important.
		The SD estimator itself belongs to the class of model reduction methods. It leverages the nonparametric  ARX estimate to construct filtered input–output signals, and then sequentially recovers the dynamic and noise components of the BJ model  by solving two auxiliary OE models via LS estimation.

		The proposed SDGN method is proved to be consistent and asymptotically efficient under Gaussian noise in both open- and closed-loop settings, matching the theoretical guarantees of WNSF.
		Unlike BJSM and MORSM, which are designed for  open-loop data and lack   theoretical characterization for the noise model, SDGN provides a complete and unified treatment.
		Crucially, SDGN can avoids the potential ill-conditioning issues that affect WNSF under low-pass inputs, as the model reduction steps of the SD do not involve  the same possibly ill-conditioned matrix.
		Our simulation results  verify that SDGN consistently outperforms WNSF in low-excitation scenarios, while achieving comparable  performance in other settings.
	}



The rest of the paper is organized as follows. Section \ref{sec2} introduces the BJ model, and {\color{blue}the general framework of asymptotically efficient} two-step estimators. Section \ref{sec3} presents the estimation procedures for the ARX($\infty$) and OE models, which {\color{blue}form the building blocks of} the SD estimator. Section \ref{sec4} develops the SD estimator {\color{blue}for  both open-loop and closed-loop settings,} establishes its consistency, {\color{blue}and proves the asymptotic efficiency of the proposed SDGN method.} Section \ref{sec5} evaluates the performance of the SD and {\color{blue}SDGN estimators through comprehensive Monte Carlo simulations, comparing them against state-of-the-art methods.} Finally, Section \ref{sec6} concludes the paper with a brief {\color{blue}summary and outlook.}

	{\color{blue}\textbf{Notation}
		We use the following notation throughout the paper.
		The symbol $q$ denotes the forward operator acting on a sequence $\{u(t),t\geq 1\}$, i.e., $qu(t)=u(t+1)$.
		The symbol $E$ means the mathematical expectation of a random variable.
		For a random sequence $\{X(n),n\geq 1\}$, (i) $X(n)=O_p(1)$ represents that $\{X(n)\}$ is bounded in probability, i.e.,  for any $\epsilon> 0$, there exists constant $L>0$ and integer $N>0$ such that $P(|X(n)|>L)<\epsilon$ for $n>N$; (ii) $X(n)\xra{}0$ (equivalently $X(n)=o_p(1)$) represents that $\{X(n)\}$ converges to zero in probability, i.e.,
		for any $\epsilon> 0$,  there holds that $\lim_{n\xra{}\infty}P(|X(n)|>\epsilon)=0$;
		(iii) $X(n)\xra{}\mathcal{N}(0,\sigma^2)$ denotes that $X(n)$ converges in distribution to a Gaussian random variable with mean zero and variance $\sigma^2$;
		(iv) $\sigma\{X(n),1\leq n\leq t\}$  means the $\sigma$-algebra generated by  random variables $\{X(n),1\leq n\leq t\}$.
		For a vector $a$, $\|a\|$ and $\|a\|_1$ means the 2-norm and 1-norm, respectively.
		For a square matrix $A$,    $A > 0$  means $A$ is positive definite.
	}


\section{Problem formulation}
\label{sec2}

Consider the single-input single-output BJ model \cite[Equation (4.31)]{Ljung1999},  described by
\begin{subequations}
	\label{gm}
	\begin{align}
		y(t)=\frac{B(q)}{F(q)}u(t)+\frac{C(q)}{D(q)}e(t)
	\end{align}
	\begingroup\allowdisplaybreaks
	with
	\begin{align}
		 & B(q)=b_1q^{-1}+\cdots +b_{p_b}q^{-{p_b}},   \\
		 & C(q)=1+c_1q^{-1}+\cdots +c_{p_c}q^{-{p_c}}, \\
		 & D(q)=1+d_1q^{-1}+\cdots +d_{p_d}q^{-{p_d}}, \\
		 & F(q)=1+f_1q^{-1}+\cdots +f_{p_f}q^{-{p_f}},
	\end{align}
\end{subequations}
\endgroup
where  $y(t)$, $u(t)$ and $e(t)$ are the output, input and noise at time $t$, and the rational functions $B(q)/F(q)$ and $C(q)/D(q)$ are called the resulting  dynamic model and noise model, respectively.

Let us collect all the parameters of the  model \eqref{gm} in
$\theta=[\theta_b^T,\theta_c^T,\theta_d^T,\theta_f^T]^T$ with  $\theta_b=[b_1,\cdots,b_{p_b}]^T$, $\theta_c=[c_1,\cdots,c_{p_c}]^T$, $\theta_d=[d_1,\cdots,d_{p_d}]^T$,  and $\theta_f=[f_1,\cdots,f_{p_f}]^T$.
Let $\theta^o$ be the true parameters of   the model \eqref{gm}.
Accordingly, $\theta_b^o$, $\theta_c^o$, $\theta_d^o$ , and $\theta_f^o$ are the true parameters  corresponding to polynomials $B^o(q), C^o(q),$ $D^o(q)$, and $F^o(q)$.
Thus, the estimation of the BJ model \eqref{gm} aims to recover the parameters $\theta^o$ as accurately as possible based on the available data $\{u(t),y(t),t=1,\cdots,n\}$.

\subsection{Assumptions}
Let us first list the assumptions on the BJ  model \eqref{gm} as follows.
\begin{assum}
	(True model)
	\label{assum1}
	\begin{enumerate}[(i)]
		\item The orders $p_b,p_c,p_d,$ and $p_f$  are available.

		\item All of the true polynomials $C^o(q)$, $D^o(q)$, and $F^o(q)$ are stable, i.e., all the roots of these three polynomials are inside of the unit circle.

		\item The polynomials $q^{p_b}B^o(q)$ and $q^{p_f}F^o(q)$ have no common factor.

		\item The polynomials $q^{p_c}C^o(q)$ and $q^{p_d}D^o(q)$ have no common factor.

	\end{enumerate}
\end{assum}

\begin{assum} (Noise)
	\label{assum3}
	The noise sequence $\{e(t)\}$ is a stochastic process that satisfies
	\begin{align*}
		E(e(t)|\mathscr{F}_{t-1})=0, 	E(e(t)^2|\mathscr{F}_{t-1})=\sigma^2, E(|e(t)|^{10})\leq C,
	\end{align*}
	where $C$ is a constant and  $\mathscr{F}_t$ is the $\sigma$-algebra generated {\color{red!70!black}according to}  $\{e(s),u(s),1\leq s\leq t\}$.

\end{assum}

{\color{red!70!black}
\begin{assum} (Input)
	\label{assum4}
	The input $\{u(t)\}$ has a feedback form  	$u(t)=-K(q)y(t)+r(t)$ satisfying the following conditions:
	\begin{enumerate}[(i)]
		\item   The sequence \(\{r(t)\}\) is independent of \(\{e(t)\}\), \( \sqrt{\log n/n}\)-quasi-stationary, and uniformly bounded.

		\item
		      Let \(\Phi_r(q) = F_r(q)F_r(q^{-1})\) be the spectral factorization of  \(\{r(t)\}\) with causal  \(F_r(q)\).
		      Then  \(F_r(q)\)  is  BIBO stable.

		\item The closed loop system is \(1/\sqrt{n}\)-stable.

		\item The feedback transfer function \(K(q)\) is bounded on the unit circle.

		\item  The  spectral density of the process \([r_t~ e_t ]^T\) is  bounded from below by the matrix \(\psi I\) with  \(\psi > 0\).
	\end{enumerate}
\end{assum}
The detailed definitions on quasi-stationarity, stable, and  spectral density of a sequence can refer to   \cite{Ljung1992}.

\begin{rem}
	Assumptions \ref{assum1}--\ref{assum4}  above on true model, noise and input are standard for BJ model estimation operated in open- and closed-loop (See\cite[Assumptions 1--3]{Galrinho2019J1}).
\end{rem}
}


\subsection{Prediction error methods}
By  \cite[Equation (4.32)]{Ljung1999}, the one-step-ahead predictor for the model \eqref{gm} is
\begin{align}
	\widehat{y}(t|\theta)
	=\frac{D(q)B(q)}{C(q)F(q)}u(t)
	+\left(
	1-
	\frac{D(q)}{C(q)}
	\right)y(t)
\end{align}
and hence the prediction error of  the model \eqref{gm} is
\begin{align}
	\varepsilon(t,\theta)=y(t) \!-\! \widehat{y}(t|\theta)=
	\frac{D(q)}{C(q)}y(t)\!-\!\frac{D(q)B(q)}{C(q)F(q)}u(t),
\end{align}
where $y(t)$ is the output of the model \eqref{gm} with the true parameters $\theta^o$ under $u(t)$ and $e(t)$.
Thus, we obtain  the  loss function for the PEM with a quadratic form
\begin{align}\ell_n(\theta) = \frac{1}{n}\sum_{t=1}^n \varepsilon^2(t,\theta)\label{pemloss}
\end{align}
and the PEM estimates
the   true parameters $\theta^o$ by minimizing the loss function
\begin{align}
	\label{pem}
	\widehat{\theta}_n^{\rm pem}\eq \argmin_{\theta}
	\ell_n(\theta).
\end{align}
Under certain conditions, the PEM  $\widehat{\theta}_n^{\rm pem}$ enjoys attractive convergence properties, illustrated below. Before presenting it, we  need an assumption on the differentiability of $\ell_n(\theta)$.
{\color{red!70!black}
\begin{assum}
	\label{assum5}
	The loss function $\ell_n(\theta)$ is three-times differentiable on a compact set of $\theta^o$ and its  Hessian matrix $\frac{\partial^2 \ell_n(\theta)}{\partial \theta\partial \theta^T}\Big|_{\theta = \theta^o}$
	exists and converges to a positive definite matrix in probability as $n\xra{}\infty$. Moreover, $\big|\frac{\partial^3 \ell_n(\theta)}{\partial \theta_i\partial \theta_j\partial \theta_k} \big|$ are bounded in probability on a compact set of $\theta^o$.
\end{assum}
}
\begin{prop}
	\label{prop11}
	\cite[Theorem 8.2, page 254, Theorem 9.1, page 282]{Ljung1999}
	Suppose that Assumptions \ref{assum1}--\ref{assum4} hold.
	Thus, the PEM  estimator $\widehat{\theta}_n^{\rm pem}$ converges to its true value $\theta^o$  in probability as $n\xra{}\infty$.
	If further Assumption \ref{assum5} holds,  then
	$\widehat{\theta}_n^{\rm pem}$ shares the asymptotic normality
	\begin{align}
		\sqrt{n}(\widehat{\theta}_n^{\rm pem}-\theta^o) \xra{}\mathscr{N}(0,\sigma^2(M^o)^{-1}),
	\end{align}
	where
	$M^o\eq E\big(	\psi(t,\theta^o)\psi(t,\theta^o)^T\big)$
	with $	\psi(t,\theta^o)=-\frac{\partial \varepsilon(t,\theta)}{\partial \theta}|_{\theta=\theta^o}$.
\end{prop}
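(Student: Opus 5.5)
The plan follows the classical route behind \cite[Theorems 8.2 and 9.1]{Ljung1999}: first establish uniform convergence of the criterion $\ell_n(\theta)$, which gives consistency, and then Taylor-expand the gradient around $\theta^o$, which gives asymptotic normality.

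\textbf{Consistency.} Restrict $\theta$ to a compact set $\mathcal{D}$ on which $C(q)$ and $F(q)$ are uniformly stable. This makes the predictor filters $D/C$ and $DB/(CF)$ uniformly exponentially stable. Assumptions \ref{assum3}--\ref{assum4} (bounded $r$, bounded $K$, stable closed loop) imply that $\{u(t),y(t)\}$ is quasi-stationary with uniformly bounded moments. From this, show
\[
\sup_{\theta\in\mathcal{D}}|\ell_n(\theta)-\bar\ell(\theta)|\to 0 \quad\text{in probability}, \qquad \bar\ell(\theta)=\lim_{n\to\infty} E\,\varepsilon^2(t,\theta).
\]
The uniform law of large numbers should follow from a pointwise law for the filtered quasi-stationary signals, together with equicontinuity in $\theta$ coming from uniform bounds on $\partial\varepsilon/\partial\theta$.

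Next, identify the minimizer of $\bar\ell$. Write $\varepsilon(t,\theta)-e(t)$ as a filtered function of $u$ and $e$ that depends only on past data. Using $E(e(t)\mid\mathscr{F}_{t-1})=0$, we get
\[
\bar\ell(\theta)=\sigma^2+\bar E\big(\varepsilon(t,\theta)-e(t)\big)^2 .
\]
Hence $\bar\ell(\theta)\ge\sigma^2$, with equality only if $\varepsilon(t,\theta)-e(t)$ vanishes in mean square. This is the main obstacle. In the closed-loop case one must show that this forces $D/C=D^o/C^o$ and $B/F=B^o/F^o$. I would rewrite $\varepsilon(t,\theta)-e(t)$ in terms of $r$ and $e$ through the closed-loop sensitivity functions. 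Assumption \ref{assum4}(v), the joint spectrum of $[r,e]$ bounded below by $\psi I$, then makes the spectral integral vanish only if both transfer-function discrepancies are zero. The coprimeness conditions in Assumption \ref{assum1}(iii)--(iv) convert equality of the rational functions into equality of the parameters. Uniform convergence plus this unique minimizer at $\theta^o$ gives $\widehat\theta_n^{\rm pem}\to\theta^o$ in probability.

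\textbf{Asymptotic normality.} Since $\widehat\theta_n^{\rm pem}$ is an interior minimizer with probability tending to one, $\ell_n'(\widehat\theta_n^{\rm pem})=0$. A mean-value expansion gives
\[
0=\ell_n'(\theta^o)+\ell_n''(\bar\theta_n)(\widehat\theta_n^{\rm pem}-\theta^o).
\]
By Assumption \ref{assum5}, the bounded third derivatives and consistency yield $\ell_n''(\bar\theta_n)-\ell_n''(\theta^o)\to0$. Moreover $\ell_n''(\theta^o)\to 2M^o>0$: the term involving $\varepsilon\,\partial^2\varepsilon$ vanishes because $\varepsilon(t,\theta^o)=e(t)$ is a martingale difference with respect to $\mathscr{F}_{t-1}$, while $\partial^2\varepsilon$ at $\theta^o$ is $\mathscr{F}_{t-1}$-measurable.

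For the gradient, $\sqrt n\,\ell_n'(\theta^o)=-\frac{2}{\sqrt n}\sum_t\psi(t,\theta^o)e(t)$, a martingale sum with predictable $\psi$. I would apply the martingale central limit theorem:
\begin{itemize}
\item the conditional variance $\frac1n\sum_t\psi\psi^T\sigma^2\to\sigma^2M^o$ by quasi-stationarity;
\item the Lindeberg condition follows from the bounded higher moments of $e$ (the tenth moment is more than enough).
\end{itemize}
This gives $\sqrt n\,\ell_n'(\theta^o)\to\mathscr{N}(0,4\sigma^2M^o)$. Slutsky's lemma then yields
\[
\sqrt n(\widehat\theta_n^{\rm pem}-\theta^o)\to\mathscr{N}\big(0,\sigma^2(M^o)^{-1}\big).
\]

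Finally, the effect of initial conditions in the filters (finite-sample truncation) decays exponentially and is $o_p(1/\sqrt n)$, so it can be neglected throughout.
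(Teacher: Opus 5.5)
Your proposal is correct and follows essentially the same route as the paper, which gives no argument of its own beyond the citation to \cite[Theorems 8.2 and 9.1]{Ljung1999}. Your steps are exactly the classical argument behind those theorems: uniform convergence of $\ell_n$, identification of $\theta^o$ as the unique minimizer of the limit criterion (using Assumption~\ref{assum4}(v) and the coprimeness conditions, a step the citation leaves implicit), then a mean-value expansion of the gradient combined with a martingale CLT, which fits the martingale-difference noise of Assumption~\ref{assum3} well.
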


\begin{rem}
	The PEM estimator \eqref{pem} with the quadratic loss function is equivalent to the ML estimation and thus becomes asymptotically efficient when the noise is a zero-mean independent and identically distributed (iid) Gaussian random variable sequence {\color{blue}\cite[Section 9.7, page 304]{Ljung1999}.}
\end{rem}

\subsection{{\color{blue}Asymptotically efficient} two-step estimators}
{\color{blue}
	The optimization problem   \eqref{pem} is non-convex, making it difficult to locate the global minimum without a good initial value. However, if a high-quality initial estimator can be constructed from the data, convergence to the global minimum becomes attainable.
	This principle underlies the class of asymptotically efficient two-step estimators \cite{Vaart1998, Mu2017J1, Lehmann1998}, which proceed as follows:
}
\begin{enumerate}[Step 1]
	\item  Construct a consistent  initial    estimator $\widehat{\theta}_n$ for $\theta^o$ using the observed data $\{u(t),y(t),t=1,\cdots,n\}$, i.e.,
	      $\widehat{\theta}_n-\theta^o=o_p(1)$;
	\item  Run a Newton-based optimization algorithm for   problem \eqref{pem} with $\widehat{\theta}_n$  as the starting point.
\end{enumerate}

In Step 2,  Newton-based optimization algorithms, such as the  GN algorithm, Newton--Raphson algorithm, or Levenberg--Marquardt algorithm,  can refine the consistent initial estimator {\color{blue} to achieve asymptotic efficiency. Among these,} the GN algorithm is often preferred {\color{blue}due to its favorable balance of computational simplicity} and strong theoretical properties, as detailed in Lemma~\ref{lm1} below {\color{blue}\cite{Jennrich1986, Brockwell1991, Duchesne2020}.}
Let the gradient of the one-step-ahead predictor $\widehat{y}(t|\theta)$ with respect to $\theta$ be denoted by
\begin{align}
	\frac{\partial \widehat{y}(t|\theta)}{\partial \theta}
	=\left[\frac{\partial \widehat{y}(t|\theta)}{\partial \theta_b^T},
		\frac{\partial \widehat{y}(t|\theta)}{\partial \theta_c^T},
		\frac{\partial \widehat{y}(t|\theta)}{\partial \theta_d^T},
		\frac{\partial \widehat{y}(t|\theta)}{\partial \theta_f^T}
		\right]^T
\end{align}
with
\begingroup\allowdisplaybreaks
\begin{align*}
	 & \frac{\partial \widehat{y}(t|\theta)}{\partial \theta_b^T}
	= \frac{D(q)}{C(q)F(q)}[q^{-1},q^{-2},\cdots,q^{-p_{b}}]u(t),                \\
	 & \frac{\partial \widehat{y}(t|\theta)}{\partial \theta_c^T}
	= -\frac{B(q)D(q)}{C^2(q)F(q)}[q^{-1},q^{-2},\cdots,q^{-p_{c}}]u(t)        +\frac{D(q)}{C^2(q)}[q^{-1},q^{-2},\cdots,q^{-p_{c}}]y(t), \\
	 & \frac{\partial \widehat{y}(t|\theta)}{\partial \theta_d^T}
	= \frac{B(q)}{C(q)F(q)}[q^{-1},q^{-2},\cdots,q^{-p_{d}}]u(t) -\frac{1}{C(q)}[q^{-1},q^{-2},\cdots,q^{-p_{d}}]y(t) ,      \\
	 & \frac{\partial \widehat{y}(t|\theta)}{\partial \theta_f^T}
	= -\frac{B(q)D(q)}{C(q)F^2(q)}[q^{-1},q^{-2},\cdots,q^{-p_{f}}]u(t)	.
\end{align*}
\endgroup
{\color{blue}Thus, the
	one-step GN refinement  is}
\begin{subequations}\label{gn}
	\begin{align}
		           & \widehat{\theta}_n^{gn}=\widehat{\theta}_n+(J^T J )^{-1}J^T (y-f ), \\
		\!\!\!\!J  & =\left[
			\frac{\partial \widehat{y}(1|\theta)}{\partial \theta},
			\cdots,
			\frac{\partial \widehat{y}(n|\theta)}{\partial \theta}
		\right]^T\Big|_{\theta=\widehat{\theta}_n},                                      \\
		\!\!	\!\!y & = [y(1),\cdots,y(n)]^T,
		f =[\widehat{y}(1| \widehat{\theta}_n ),\cdots,\widehat{y}(n| \widehat{\theta}_n )]^T,
	\end{align}
\end{subequations}
where   $\widehat{\theta}_n$ is the consistent estimate given by Step 1.
The  two-step estimator presented above has the following attractive properties.
	{\color{red!70!black}
		\begin{lem}\cite[Theorem 2]{Duchesne2020}
			\label{lm1}
			Suppose that Assumptions \ref{assum1}--\ref{assum5} hold.
			If  the initial estimator $\widehat{\theta}_n$ is consistent with the rate of convergence $\widehat{\theta}_n-\theta^o=O_p(1/n^\nu)$ with $\nu>1/4$, then the
			one-step GN refinement $ \widehat{\theta}_n^{gn}$ is  asymptotically equivalent to the PEM, namely,
			$\sqrt{n}(\widehat{\theta}_n^{gn} - \widehat{\theta}^{\rm pem}_{n})=o_p(1).$

		\end{lem}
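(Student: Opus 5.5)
The approach is the standard Taylor-expansion argument for one-step Newton-type estimators. Write $\varepsilon(\theta)=y-f(\theta)$ for the stacked prediction-error vector and $J(\theta)=-\partial\varepsilon(\theta)/\partial\theta^T$. Then $\nabla\ell_n(\theta)=-\frac{2}{n}J(\theta)^T\varepsilon(\theta)$. With $G_n(\theta)\eq\frac1n J(\theta)^TJ(\theta)$, the update \eqref{gn} reads
\begin{align*}
\widehat{\theta}_n^{gn}=\widehat{\theta}_n-\tfrac12 G_n(\widehat{\theta}_n)^{-1}\nabla\ell_n(\widehat{\theta}_n).
\end{align*}
Proposition \ref{prop11} gives $\widehat{\theta}_n^{\rm pem}\to\theta^o$. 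Hence, with probability tending to one, $\widehat{\theta}_n^{\rm pem}$ is an interior point of the compact neighbourhood in Assumption \ref{assum5}, where $\nabla\ell_n(\widehat{\theta}_n^{\rm pem})=0$. It also satisfies $\widehat{\theta}_n^{\rm pem}-\theta^o=O_p(n^{-1/2})$ by the asymptotic normality in Proposition \ref{prop11}.

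\textbf{Step 1: expand the gradient.} Expand $\nabla\ell_n$ around $\widehat{\theta}_n^{\rm pem}$:
\begin{align*}
\nabla\ell_n(\widehat{\theta}_n)=H_n(\widehat{\theta}_n^{\rm pem})(\widehat{\theta}_n-\widehat{\theta}_n^{\rm pem})+R_n,\qquad \|R_n\|\le C_n\|\widehat{\theta}_n-\widehat{\theta}_n^{\rm pem}\|^2 .
\end{align*}
Here $H_n$ is the Hessian of $\ell_n$. The constant $C_n=O_p(1)$ comes from the bounded third derivatives in Assumption \ref{assum5}. Since $\widehat{\theta}_n-\widehat{\theta}_n^{\rm pem}=O_p(n^{-\nu})$, we get $R_n=O_p(n^{-2\nu})=o_p(n^{-1/2})$ precisely because $\nu>1/4$.

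\textbf{Step 2: compare the GN matrix with the Hessian.} We have
\begin{align*}
H_n(\theta)=2G_n(\theta)-\tfrac2n\sum_t \varepsilon(t,\theta)\,\frac{\partial^2\varepsilon(t,\theta)}{\partial\theta\partial\theta^T}.
\end{align*}
At $\theta^o$, $\varepsilon(t,\theta^o)=e(t)$ up to transients. The second derivatives are built from past data, so they are $\mathscr{F}_{t-1}$-measurable. The sum is therefore a martingale-difference average, which is $O_p(n^{-1/2})$ by Assumption \ref{assum3} and the quasi-stationarity and closed-loop stability in Assumption \ref{assum4}. Combining this with Lipschitz continuity of $H_n$ and $G_n$ on the compact set gives
\begin{align*}
\tfrac12H_n(\widehat{\theta}_n^{\rm pem})-G_n(\widehat{\theta}_n)=O_p(n^{-1/2}+n^{-\nu}).
\end{align*}
Assumption \ref{assum5} makes $H_n(\theta^o)$ converge to a positive definite limit, which is $2M^o$. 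Hence $G_n(\widehat{\theta}_n)^{-1}=O_p(1)$.

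\textbf{Step 3: combine.} Substituting Steps 1 and 2 into the GN update gives
\begin{align*}
\widehat{\theta}_n^{gn}-\widehat{\theta}_n^{\rm pem}=\big(I-G_n(\widehat{\theta}_n)^{-1}\tfrac12H_n(\widehat{\theta}_n^{\rm pem})\big)(\widehat{\theta}_n-\widehat{\theta}_n^{\rm pem})-\tfrac12G_n(\widehat{\theta}_n)^{-1}R_n .
\end{align*}
The first term is $O_p(n^{-1/2}+n^{-\nu})\cdot O_p(n^{-\nu})=o_p(n^{-1/2})$, using $\nu>1/4$. The second term is $o_p(n^{-1/2})$ by Step 1. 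Multiplying by $\sqrt n$ gives the claim.

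\textbf{Main obstacle.} The delicate part is Step 2: showing that the residual-weighted second-derivative term is uniformly small near $\theta^o$ under closed-loop data. The regressors depend on $y$, which is correlated with past noise through $K(q)$. My first attempt is to exploit the predictor structure: $\partial^2\varepsilon(t,\theta^o)$ is a stable filter applied to $\{u(s),y(s),s\le t-1\}$, which makes it $\mathscr{F}_{t-1}$-measurable. I would then apply a martingale law of large numbers with the tenth-moment bound. Lipschitz continuity of $H_n$ on the compact set, from the third-derivative bound in Assumption \ref{assum5}, transfers the estimate from $\theta^o$ to $\widehat{\theta}_n^{\rm pem}$. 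Transient effects from initial conditions decay geometrically by the stability in Assumption \ref{assum1}(ii), so they are negligible.
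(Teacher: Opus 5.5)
Your argument is correct. The paper gives no proof of its own here; it imports the lemma from \cite[Theorem 2]{Duchesne2020}. Your route is the classical one-step-Newton argument. Since $\nabla\ell_n(\widehat{\theta}_n^{\rm pem})=0$ with probability tending to one, the GN step leaves $G_n^{-1}(G_n-\tfrac12H_n)(\widehat{\theta}_n-\widehat{\theta}_n^{\rm pem})$ plus a quadratic Taylor remainder, and $\nu>1/4$ is exactly what makes both $o_p(n^{-1/2})$. The citation is shorter, but your self-contained version exposes what Assumption \ref{assum5} does not supply, since that assumption only controls derivatives of $\ell_n$. Two further facts must come from the BJ predictor structure together with Assumptions \ref{assum3}--\ref{assum4}. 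The first is $\frac1n\sum_t\varepsilon(t,\theta^o)\,\partial^2\varepsilon(t,\theta^o)/\partial\theta\partial\theta^T=O_p(n^{-1/2})$. Your martingale-difference argument handles this correctly, because the predictor derivatives are strictly causal filters of past $u,y$ and hence $\mathscr{F}_{t-1}$-measurable even under feedback. The second is that $G_n$ is Lipschitz near $\theta^o$ with an $O_p(1)$ constant, which you assert without justification. A bound on $\partial^3\ell_n$ does not isolate the cross terms $\frac1n\sum_t\partial\varepsilon\,\partial^2\varepsilon$ that this needs. You should add one line: on a neighbourhood small enough that $C(q)$ and $F(q)$ stay stable, $\partial\widehat{y}(t|\theta)/\partial\theta$ and its $\theta$-derivatives are uniformly exponentially stable filters of past data, so their normalized cross-products are $O_p(1)$ uniformly. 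There are also three minor points. First, the residual term in $H_n$ enters with a plus sign, not a minus, which is harmless. Second, in general $\widehat{\theta}_n-\widehat{\theta}_n^{\rm pem}=O_p(n^{-\min\{\nu,1/2\}})$ rather than $O_p(n^{-\nu})$, which does not change the conclusion. Third, your remainder bound needs the supremum of the third derivatives over the compact set to be $O_p(1)$, because the intermediate point is random. That is the reading of Assumption \ref{assum5} you must adopt, and the paper itself states it in that uniform form in Theorem \ref{thm3}.
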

		An appealing  advantage of the two-step estimator described above is that it only requires a consistent initial estimator satisfying $\widehat{\theta}_n-\theta^o=O_p(1/n^\nu),~\nu>1/4$ rather than the more demanding task of carefully designing an estimator and analyzing its exact rate of convergence.
		The next two sections are devoted to achieving this objective.
	}

\section{Consistent estimators of two auxiliary submodels}
\label{sec3}
In this section, we make a theoretical preparation for deriving a  consistent estimator of  the BJ model \eqref{gm}, which includes the  consistent estimators of   the autoregressive with exogenous input model of infinite order (ARX($\infty$) model) and the    OE  model.

\subsection{Consistent estimators of ARX($\infty$) models}
\label{sec3.1}
Consider the ARX ($\infty$) model  described in \cite{Ljung1992} by
\begingroup
\allowdisplaybreaks
\begin{subequations}\label{arx}
	\begin{align}
		 & V^o(q)y(t)=W^o(q)u(t)+e(t), ~t=1,\cdots,n                                       \\
		 & V^o(q)=1+\sum_{k=1}^\infty v_k^o q^{-1},~W^o(q)=\sum_{k=1}^\infty w_k^o q^{-1},
	\end{align}
\end{subequations}
\endgroup
where  $\sum_{k=1}^\infty \sqrt{k}|v_k^o|<\infty$ and $\sum_{k=1}^\infty \sqrt{k}|w_k^o|<\infty$.
By adopting the techniques in \cite{Ljung1992}, the model \eqref{arx}  is approximated by a high-order ARX model
\begin{subequations}
	\label{harx}
	\begin{align}
		 & V(q)y(t)=W(q)u(t)+e(t),                                       \\
		 & V(q)=1+\sum_{k=1}^m v_k q^{-k},~W(q)=\sum_{k=1}^m w_k q^{-k},
	\end{align}
\end{subequations}
where $m$ is the order of the approximate model \eqref{harx} and is a function of $n$.
	{\color{blue}The discussion on how to select \(m\) is postponed to Section \ref{sec5.2}.}
Denote the first $m$ parameters of the true polynomials $V^o(q)$ and $W^o(q)$ by
\begin{align*}
	\theta_{vw}^o
	\!=[\theta_{v}^{oT},\theta_{w}^{oT}]^T,
	\theta_{v}^o\!=[v_1^o,\cdots,v_m^o]^T,
	\theta_{w}^o\!=[w_1^o,\cdots,w_m^o]^T.
\end{align*}
Here, we aim to use the data  $\{u(t),y(t),t=1,\cdots,n\}$ generated by the model \eqref{arx} to estimate the $2m$ parameters $\theta_{vw}^o$ in terms of the truncated model \eqref{harx}.
By letting $\theta_{vw}=
	[\theta_v^T,\theta_w^T]^T$ with $\theta_v=[v_1,\cdots,v_m]^T$ and
$\theta_w=[w_1,\cdots,w_m]^T$,
the ARX model \eqref{harx} has a linear regression form:
\begin{subequations}\label{arx2}
	\begin{align}
		y & =X\theta_{vw} + e,           \\
		y & =[y(1),y(2),\cdots,y(n)]^T,  \\
		X & = [x(1),x(2),\cdots,x(n)]^T, \\
		e & = [e(1),e(2),\cdots,e(n)]^T,
	\end{align}
\end{subequations}
and regressor $x(t) = [-y(t-1),\cdots,-y(t-m),u(t-1),\cdots,u(t-m)]^T$.
As a result, the parameters $\theta_{vw}^o$ are estimated by the LS method
\begin{align}
	\label{ls}
	\widehat{\theta}_n^{vw}=[\widehat{\theta}_n^{vT},
	\widehat{\theta}_n^{wT}]^T
	\eq(X^TX)^{-1}X^Ty.
\end{align}
Accordingly, we denote the estimators for $V^o(q)$ and $W^o(q)$ by
	\begin{align*}
		   V(q,\widehat{\theta}_n^v)
		=1 + \sum_{k=1}^m \widehat{v}_kq^{-k},~~ W(q,\widehat{\theta}_n^w)
		=1 + \sum_{k=1}^m \widehat{w}_kq^{-k},
	\end{align*}  
where $\widehat{v}_k$ and $\widehat{w}_k$
are the $k$-th entry of
$\widehat{\theta}_n^{v}$ and $\widehat{\theta}_n^{w}$, respectively.
We have the following convergence results on the estimator \eqref{ls}.
{\color{red!70!black}
\begin{lem}
	\label{lm2}
	\cite[Theorem 6.1 and Lemma 5.1]{Ljung1992}
	Consider the ARX($\infty$) model \eqref{arx}.
	Suppose Assumptions \ref{assum3} and \ref{assum4}  hold for the ARX($\infty$) model and further suppose  the truncated model order $m$ satisfies
	\begin{enumerate}[(i)]
		\item $m\xra{}\infty$ as $n\xra{}\infty$;
		\item $m^{3+\kappa}/n\xra{}0$ as $n\xra{}\infty$ for some $\kappa>0$.
	\end{enumerate}
	Thus, the estimator  $\widehat{\theta}_n^{vw}$ converges to $\theta_{vw}^o$  in probability as $n\xra{}\infty$
	with the rate of convergence
		\begin{align} 	\label{arx_rate} \big\|\widehat{\theta}_n^{vw}-\theta_{vw}^o\big\|_1
			=O_p(\delta_n),~~\delta_n\eq \frac{m}{\sqrt{ n}}+d_m,   ~~
			d_m\eq\sum_{k=m+1}^\infty |v_k^o| + |w_k^o|.
		\end{align} 
\end{lem}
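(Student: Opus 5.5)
The plan is to write the LS error as a sum of a noise term and a truncation-bias term, and to bound each one separately. Decompose $y = X\theta_{vw}^o + \tilde e + e$, where the $t$-th entry of the truncation residual is
\[
\tilde e(t) = -\sum_{k>m} v_k^o y(t-k) + \sum_{k>m} w_k^o u(t-k).
\]
Then
\begin{align*}
\widehat{\theta}_n^{vw}-\theta_{vw}^o = \Big(\tfrac1n X^TX\Big)^{-1}\tfrac1n X^Te + \Big(\tfrac1n X^TX\Big)^{-1}\tfrac1n X^T\tilde e ,
\end{align*}
and the $1$-norm is controlled through $\|a\|_1\le\sqrt{2m}\,\|a\|$.

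The first step is to show that $\|(\frac1n X^TX)^{-1}\|=O_p(1)$ uniformly in $m$. Assumption~\ref{assum4}(v) bounds the joint spectrum of $[r~e]^T$ from below. Under the closed-loop stability assumptions, the spectrum of $[y~u]^T$ is then bounded from below on the unit circle. Hence the Toeplitz matrix $R_m=E\,x(t)x(t)^T$ has smallest eigenvalue at least some $c>0$, independent of $m$.

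The second step is the covariance estimation error. Using the $\sqrt{\log n/n}$-quasi-stationarity and the $1/\sqrt n$-stability in Assumption~\ref{assum4}, together with the moment bounds of Assumption~\ref{assum3}, each entry of $\frac1n X^TX-R_m$ is $O_p(\sqrt{\log n/n})$ uniformly in lags up to $m$. This gives
\[
\Big\|\tfrac1n X^TX-R_m\Big\|\le 2m\max_{ij}\big|\cdot\big| = O_p\big(m\sqrt{\log n/n}\big)\to 0
\]
because $m^{3+\kappa}/n\to0$. A Neumann-series argument then transfers the eigenvalue bound from $R_m$ to $\frac1n X^TX$.

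The third step is the noise term. Because $e(t)$ is a martingale difference with respect to $\mathscr F_{t-1}$ and $x(t)$ is $\mathscr F_{t-1}$-measurable (the feedback form keeps $u(t-k)$, $k\ge1$, in the past), $E\|\frac1n X^Te\|^2 = \frac{\sigma^2}{n^2}\sum_t E\|x(t)\|^2 = O(m/n)$. So $\|\frac1nX^Te\|=O_p(\sqrt{m/n})$, and after the $\sqrt{2m}$ conversion this contributes $O_p(m/\sqrt n)$ to the $1$-norm.

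The fourth step is the bias term. By Minkowski's inequality and the uniform second moments of $y$ and $u$, we have $\big(E\tilde e(t)^2\big)^{1/2}\le C d_m$. Each coordinate of $\frac1n X^T\tilde e$ is bounded by Cauchy--Schwarz in $O_p(d_m)$.

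The $1$-norm of the bias contribution is where I expect the main obstacle. A crude bound gives a factor $m$ and yields $O_p(m d_m)$, not $O_p(d_m)$. To remove that factor I would follow Ljung--Wahlberg. First, replace $\frac1nX^T\tilde e$ by its expectation $E\,x(t)\tilde e(t)$; the fluctuation of $\tilde e$ around it is of order $d_m$ times the covariance estimation error, which is negligible. Second, write
\[
E\,x(t)\tilde e(t)=\sum_{k>m}\big(-v_k^o\,E\,x(t)y(t-k)+w_k^o\,E\,x(t)u(t-k)\big).
\]
The covariance functions are absolutely summable because the closed-loop system is stable. It follows that $\|R_m^{-1}E\,x(t)\tilde e(t)\|_1\le \|R_m^{-1}\|_{1}\,C d_m$. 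Third, bound $\|R_m^{-1}\|_1$ uniformly in $m$ using the inverse-Toeplitz bound from the spectral lower bound and the BIBO stability of $F_r$ in Assumption~\ref{assum4}(ii). This is essentially Lemma 5.1 of the cited reference.

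Combining the four steps gives $\|\widehat\theta_n^{vw}-\theta_{vw}^o\|_1=O_p(m/\sqrt n+d_m)$. This tends to zero since $m\to\infty$ and the summability condition forces $d_m\to0$.
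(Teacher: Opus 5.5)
Your proposal is correct, but it is organized differently from the paper's proof, which never opens up the least-squares error. The paper inserts the order-$m$ population LS solution $\bar\theta_{vw}$ of Ljung--Wahlberg's Eq.~(5.1). It then imports $E\|\widehat\theta_n^{vw}-\bar\theta_{vw}\|_1=O(m/\sqrt n+d_m m\sqrt{\log n/n})$ from their Theorem~6.1 and $\|\bar\theta_{vw}-\theta^o_{vw}\|_1=O(d_m)$ from their Lemma~5.1. It absorbs $d_m m\sqrt{\log n/n}$ using $m^{3+\kappa}/n\to0$ and concludes with Markov's inequality.

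You split instead into a noise part and a truncation part and derive the variance-type bounds yourself. Because $\bar\theta_{vw}=R_m^{-1}E\,x(t)y(t)$, your leading bias piece $R_m^{-1}E\,x(t)\tilde e(t)$ is exactly $\bar\theta_{vw}-\theta^o_{vw}$. So your Step~4 lands on the same Baxter-type fact that Lemma~5.1 packages, namely uniform boundedness of the induced $\ell_1$ norm of $R_m^{-1}$. Your Steps~1--3 and your fluctuation estimates in effect reconstruct Theorem~6.1. Your route shows where each hypothesis enters and gives the rate directly in probability. The paper's route buys brevity at the price of resting entirely on the match between Assumptions~\ref{assum3}--\ref{assum4} and Ljung--Wahlberg's hypotheses.

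Two points should be made explicit when you write it out.

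First, the bias term is $(\tfrac1nX^TX)^{-1}\tfrac1nX^T\tilde e$, not $R_m^{-1}E\,x(t)\tilde e(t)$. So the cross term $[(\tfrac1nX^TX)^{-1}-R_m^{-1}]\tfrac1nX^T\tilde e$ must also be bounded. It is $o_p(d_m)$ provided you use the refined estimate $\|\tfrac1nX^T\tilde e\|\le\|E\,x(t)\tilde e(t)\|_1+O_p(\sqrt m\,d_m\sqrt{\log n/n})=O_p(d_m)$. That gives $O_p(m^{3/2}d_m\sqrt{\log n/n})$. If you run the Neumann series in the $\ell_1$ operator norm instead, you get $O_p(m\,d_m\sqrt{\log n/n})$, which is exactly the second term of Theorem~6.1. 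With the crude coordinatewise bound you would get $O_p(m^2d_m\sqrt{\log n/n})$, which $m^{3+\kappa}/n\to0$ does not control in general.

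Second, the fluctuation of $\tfrac1nX^T\tilde e$ involves sample covariances at every lag $k-i$ with $k>m$. The uniform covariance-error bound of Step~2 must therefore hold over that larger lag range, not only over lags up to $m$, or be supplemented by a tail argument.
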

\textbf{Proof.} See proof in Appendix B.
}

\subsection{Consistent estimators of OE models}
\label{sec3.3}
Consider the OE model described by
\begin{align}
	y(t)= \frac{B(q)}{F(q)}u(t) +e(t),\label{oe}
\end{align}
which is a special case of the BJ model \eqref{gm} with $C(q)\equiv D(q)\equiv 1$.
Denote the  true parameters $\theta_{fb}^o=[\theta_{f}^{oT},\theta_{b}^{oT}]^T$  of the OE model \eqref{oe} with $\theta_{f}^{o}=[f_1^o,f_{2}^o,\cdots,f_{p_f}^o]^T$ and $\theta_{b}^{o}=[b_1^o,b_{2}^o,\cdots,b_{p_b}^o]^T$.

The estimator obtained by directly applying the LS method to the OE model using data $\{u(t), y(t),t=1,\cdots,n\}$ is biased. Consider transforming the OE model into a specific form of regression.
Let
\begin{align*}
	y^o(t) \eq \frac{B^o(q)}{F^o(q)}u(t)
\end{align*}
be noise-free output of the OE model, yielding the identity
\begin{align*}
	y^o(t) = \phi(t)^T \theta_{fb}^o
\end{align*}
with   $\phi(t)=[-y^o(t-1),\cdots,-y^o(t-p_f),u(t-1),\cdots,u(t-p_b)]^T$.
Note that $y(t)=y(t)^o +e(t)$.
Consequently,
we can write the OE model \eqref{oe} as the equivalent linear regression model
\begin{align}
	y(t) = \phi(t)^T \theta_{fb}^o +e(t).\label{lmoe}
\end{align}
It can be verified that the  LS estimator of the  model \eqref{lmoe} is unbiased and consistent. However, we need to use the noise-free output $y^o(t)$, which is unobservable. If we can obtain a consistent estimate of the data used in \eqref{lmoe}, then the LS estimator will still guarantee consistency.

Given the estimates $\{\widehat{u}(t), \widehat{y}^o(t), \widehat{y}(t)\}$ of $\{u(t), y^o(t), y(t)\}$ for $t = 1, \ldots, n$ in the OE model \eqref{lmoe}, we define
\begingroup
\allowdisplaybreaks
	\begin{align}
		 & \widehat{y}\eq [\widehat{y}(1),\widehat{y}(2),\cdots,\widehat{y}(n)]^T,              \\
		 & \widehat{\Phi} \eq  [\widehat{\phi}(1),\widehat{\phi}(2),\cdots,\widehat{\phi}(n)]^T,\label{xy} 
	\end{align}
where $\widehat{\phi}(t) \eq  [ -\widehat{y}^o(t-1), \cdots, -\widehat{y}^o(t-p_f), \widehat{u}(t-1), \cdots, $
		$\widehat{u}(t-p_b) ]^T$. Thus, the LS estimator for $\theta_{fb}^o$ is given by
\begin{align} 	\label{cls}
	 & \widehat{\theta}_n^{fb}
	\eq  ( \widehat{\Phi}^T\widehat{\Phi}
	)^{-1} \widehat{\Phi}^T\widehat{y} .
\end{align}
The above discussion is summarized in the following lemma:

\begin{lem}
	\label{lm3}
	Consider the OE model \eqref{oe}.
	Suppose that the following assumptions hold:
	\begin{enumerate}[(i)]
		\item The polynomials $q^{p_b}B^o(q)$ and $q^{p_f}F^o(q)$ have no common factor, and moreover $F^o(q)$ is stable.
		\item The noise sequence $\{e(t)\}$ is a
		      stochastic process that satisfies
		      $E(e(t)|\mathscr{F}_{t-1})=0$ with $\mathscr{F}_t\eq\sigma\{e(s),u(s),0\leq s\leq t\}$, $E(e(t)^2)=\sigma^2$ and $E(e(t)^4)<\infty$.

		\item The input sequence $\{u(t)\}$ is  persistently exciting of order $p_f+p_b$ and {\color{blue}the regressor $\{\phi(t)\}$ is uncorrelated with the noise sequence $\{e(t)\}$.}

		\item The estimates for the inputs and outputs satisfy 
			      \begin{align}
				        |\widehat{u}(t) -u(t)|=O_p(\zeta_n),                   ~~
				         {\color{blue}|\widehat{y}^o(t) -y^o(t)|=O_p(\zeta_n)	,} ~~ |\widehat{y}(t) -y(t)|=O_p(\zeta_n)
			      \end{align}
		      for all $t=1,\cdots,n$, where the deterministic sequence $\zeta_n\xra{}0$ as $n\xra{}\infty$.
	\end{enumerate}
	{\color{red!70!black}Thus,
	the LS estimator $\widehat{\theta}_n^{fb}$  defined by \eqref{cls} converges to its true value $\theta_{fb}^o$ in probability as $n\xra{}\infty$  with the rate of convergence} $$\|\widehat{\theta}_n^{fb}-\theta_{fb}^o\|_2=\max\{O_p(\zeta_n),O_p(1/\sqrt{n})\}.$$

\end{lem}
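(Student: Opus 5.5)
We split the estimation error around the infeasible LS estimator built from the true data. Let $\Phi=[\phi(1),\cdots,\phi(n)]^T$ and $\mathbf{y}=[y(1),\cdots,y(n)]^T$, and let $\bar\theta_n^{fb}\eq(\Phi^T\Phi)^{-1}\Phi^T\mathbf{y}$ be the LS estimator of the regression \eqref{lmoe}. Then
\begin{align*}
\widehat{\theta}_n^{fb}-\theta_{fb}^o=\big(\widehat{\theta}_n^{fb}-\bar\theta_n^{fb}\big)+\big(\bar\theta_n^{fb}-\theta_{fb}^o\big).
\end{align*}
We will show that the second term is $O_p(1/\sqrt n)$ and the first is $O_p(\zeta_n)$. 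Since $p_f+p_b$ is fixed, all norms are equivalent and the claimed rate follows.

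\textbf{Oracle term.} Write $\bar\theta_n^{fb}-\theta_{fb}^o=(\frac1n\Phi^T\Phi)^{-1}\frac1n\Phi^Te$. By the coprimeness and stability in (i) and the persistent excitation of order $p_f+p_b$ in (iii), the standard argument (e.g. \cite{Soderstrom1989}) gives
\begin{align*}
\frac1n\Phi^T\Phi\xra{} R\eq E\,\phi(t)\phi(t)^T>0 \quad\text{in probability.}
\end{align*}
Hence $(\frac1n\Phi^T\Phi)^{-1}=O_p(1)$. Because $y^o$ and $u$ are independent of the noise (the uncorrelatedness in (iii)) and $e$ is a martingale difference with finite variance by (ii), the term $\frac1n\sum_t\phi(t)e(t)$ has second moment $O(1/n)$. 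Chebyshev's inequality then gives $\frac1n\Phi^Te=O_p(1/\sqrt n)$, and so the second term is $O_p(1/\sqrt n)$.

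\textbf{Perturbation term.} Set $\Delta\eq\widehat\Phi-\Phi$ and $\delta\eq\widehat y-\mathbf{y}$. By (iv), every entry of $\Delta$ and $\delta$ is $O_p(\zeta_n)$. We then control the sample moments:
\begin{align*}
\frac1n\|\widehat\Phi^T\widehat\Phi-\Phi^T\Phi\|&\le \frac2n\|\Phi^T\Delta\|+\frac1n\|\Delta^T\Delta\|,\\
\frac1n\|\widehat\Phi^T\widehat y-\Phi^T\mathbf{y}\|&\le \frac1n\big(\|\Delta^T\mathbf{y}\|+\|\Phi^T\delta\|+\|\Delta^T\delta\|\big).
\end{align*}
Each entry is an average such as $\frac1n\sum_t \phi_i(t)\Delta_{tj}$. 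By Cauchy--Schwarz this is at most
\begin{align*}
\Big(\frac1n\sum_t\phi_i(t)^2\Big)^{1/2}\Big(\frac1n\sum_t\Delta_{tj}^2\Big)^{1/2}=O_p(1)\cdot O_p(\zeta_n),
\end{align*}
provided $\frac1n\sum_t\Delta_{tj}^2=O_p(\zeta_n^2)$. Consequently $\frac1n\widehat\Phi^T\widehat\Phi\xra{}R$, so its inverse is $O_p(1)$. Finally, the identity $A^{-1}b-\bar A^{-1}\bar b=A^{-1}(b-\bar b)+A^{-1}(\bar A-A)\bar A^{-1}\bar b$ with $\bar A^{-1}\bar b=O_p(1)$ yields $\widehat{\theta}_n^{fb}-\bar\theta_n^{fb}=O_p(\zeta_n)$.

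\textbf{Main obstacle.} The delicate step is passing from the pointwise bounds $O_p(\zeta_n)$ in (iv) to the averaged bound $\frac1n\sum_t\Delta_{tj}^2=O_p(\zeta_n^2)$. Pointwise $O_p$ bounds that are not uniform in $t$ do not in general control an average of $n$ terms. We would read (iv) as holding uniformly in $t$, i.e.\ $\max_t|\cdot|=O_p(\zeta_n)$, or as giving a common bound on second moments. In the intended application the estimated signals are filtered with the ARX estimate of Lemma \ref{lm2}, so the error has the form $\sum_k(\widehat v_k-v_k^o)y(t-k)$. It is then bounded by $\|\widehat\theta_n^{vw}-\theta_{vw}^o\|_1$ times $\max_t|y(t)|$, uniformly in $t$. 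This is the interpretation we would adopt and, if necessary, make explicit in the proof.
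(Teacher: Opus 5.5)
Your proof is correct and is essentially the paper's argument. The paper skips the oracle estimator: it writes $\widehat y=\widehat\Phi\theta_{fb}^o+\varpi$ with $\varpi=e+\chi$ and bounds $(\frac1n\widehat\Phi^T\widehat\Phi)^{-1}\frac1n\widehat\Phi^T\varpi$ directly. It also makes, without justification, the pointwise-to-averaged passage you flag, when it asserts $\frac1n\widehat\Phi^T\varpi=\frac1n\Phi^Te+O_p(\zeta_n)$. So your explicit averaged reading of (iv) is exactly what is needed, since the literal pointwise hypothesis does not suffice, and so is your use of $\frac1n\widehat\Phi^T\widehat\Phi\to R>0$ instead of mere positive definiteness.

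Of your two readings, keep the second-moment one, $\frac1n\sum_t\Delta_{tj}^2=O_p(\zeta_n^2)$. In the application $\max_{t\le n}|y(t)|$ is not $O_p(1)$ under unbounded noise, so the $\max_t$ reading cannot be verified there. The averaged reading can: $\big(\sum_k a_ky(t-k)\big)^2\le\|a\|_1\sum_k|a_k|y(t-k)^2$ bounds the averaged squared filtering error by $\|a\|_1^2$ times an average of $y(s)^2$, which with $a_k=\widehat v_k-v_k^o$ gives $O_p(\delta_n^2)$.
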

{\color{red!70!black} \text{Proof.} See proof in Appendix B.}

\section{Sequentially decoupling estimators and its GN refinement}
\label{sec4}

In this section, we develop the  SD estimator {\color{blue}for  both open-loop and closed-loop settings}, which progressively separates the dynamic and noise components of the  BJ model \eqref{gm}, and establish its consistency with a convergence rate of {\color{blue}\( O_p(\delta_n) \) in probability, where \( \delta_n \) is defined in \eqref{arx_rate}}, {\color{blue}and prove the asymptotic efficiency of the proposed SDGN method.}

\subsection{Algorithm of SD estimators}
\label{4.2}
In this subsection, we present the   algorithm for deriving the SD  estimator to successively estimate the four polynomials using the input-output data. The approach decouple the BJ model \eqref{gm} by successively estimating two auxiliary OE models \eqref{lmoe} with respect to the parameters of interest using filtered  data, thereby deriving a consistent estimator of the four polynomial parameters.

The algorithm is as follows:
\begin{enumerate}[(i)]
	\item Estimate the parameters $\theta_{v}^o$ {\color{blue}and $\theta_{w}^o$} (the first $m$ parameters of $V^o(q)=D^o(q)/C^o(q)$ and {\color{blue}$W^o(q)=B^o(q)D^o(q)/(F^o(q)C^o(q))$})
	      of the ARX($\infty$) model
	      \begin{align}
		      \underbrace{\frac{D^o(q)}{C^o(q)}}_{V^o(q)}y(t)=\underbrace{\frac{D^o(q)}{C^o(q)}\frac{B^o(q)}{F^o(q)}}_{W^o(q)}u(t)+e(t)\label{sea1}
	      \end{align}
	      using the approximate high-order ARX model \eqref{harx} of order $m$
	      \begin{align}\label{fsea1}
		      V(q)y(t)=W(q)u(t)+e(t)
	      \end{align}
	      in terms of the data $\{u(t),y(t),t=1,\cdots,n\}$  by the LS estimator \eqref{ls} given in Section \ref{sec3.1}.
	      Denote the estimated parameters for $V(q)$ {\color{blue}and $W(q)$} of the model \eqref{fsea1} by $ \widehat{\theta}_n^v$ {\color{blue}and $ \widehat{\theta}_n^w$}
	      as well as its corresponding polynomials
	      $V(q,\widehat{\theta}_n^{v})$ {\color{blue} and $W(q,\widehat{\theta}_n^{w})$.}

	\item

	      Estimate the parameters $\theta_{b}^{o}$ and $\theta_{f}^{o}$ of the OE model
	      \begin{align}
		      \underbrace{\frac{D^o(q)}{C^o(q)}y(t)}_{y_V^f(t)}=\overbrace{\frac{B^o(q)}{F^o(q)}\underbrace{\frac{D^o(q)}{C^o(q)}u(t)}_{u_V^f(t)}}^{{\color{blue}y_V^{of}(t)}} + e(t)
		      \label{sea2}
	      \end{align}
	      in terms of the estimated filtered signals
	      \begin{align*}
		      \big\{
		      \widehat{u}_V^f(t)\eq V(q,\widehat{\theta}_n^{v})u(t),~
		       & {\color{blue}\widehat{y}_V^{of}(t)\eq W(q,\widehat{\theta}_n^{w})u(t),}~ \widehat{y}_V^f(t)\eq V(q,\widehat{\theta}_n^{v})y(t)
		      \big\}
	      \end{align*}
	      for $t=1,\cdots,n$ 	by the LS estimator \eqref{cls} given in Section \ref{sec3.3}.
	      Denote the estimated parameters of $B^o(q)$ and $F^o(q)$ by $ \widehat{\theta}_n^b$ and $ \widehat{\theta}_n^{f}$
	      as well as their corresponding polynomials by
	      $B(q,\widehat{\theta}_n^{b})$ and $F(q,\widehat{\theta}_n^{f})$.

	\item 	Estimate the parameters $\theta_{c}^{o}$ and $\theta_{d}^{o}$ of the OE model \begin{align}
		      \underbrace{\frac{D^o(q)}{C^o(q)}y(t)}_{y_V^f(t)}=\overbrace{\frac{D^o(q)}{C^o(q)}\underbrace{\frac{B^o(q)}{F^o(q)}u(t)}_{u_{BF}^f(t)}}^{{\color{blue}y_V^{of}(t)}}+e(t)\label{sea3}
	      \end{align}
	      in terms of the estimated filtered signals
	      \begingroup
	      \allowdisplaybreaks
	      \begin{align*}
		      \big\{ 	\widehat{u}_{BF}^f(t)\eq
		      \frac {B(q,\widehat{\theta}^b_n)}{F(q,\widehat{\theta}^f_n)}u(t),~
		        {\color{blue}\widehat{y}_V^{of}(t)\eq W(q,\widehat{\theta}_n^{w})u(t),}  ~\widehat{y}_V^f(t)\eq V(q,\widehat{\theta}_n^{v})y(t)\big\}
	      \end{align*}
	      \endgroup
	      for $t=1,\cdots,n$ 	by the LS estimator  \eqref{cls} given in Section \ref{sec3.3}.
	      Denote the estimated parameters of $C^o(q)$ and $D^o(q)$ by $ \widehat{\theta}_n^c$ and $ \widehat{\theta}_n^{d}$, respectively.

	\item Get the SD estimator $\widehat{\theta}_n^{\rm sd}$ for the true parameters $\theta^o$  by stacking the estimates in the way $[\widehat{\theta}^{bT}_n,\widehat{\theta}^{cT}_n,\widehat{\theta}^{dT}_n,\widehat{\theta}^{fT}_n]^T$.
\end{enumerate}

\begin{rem}
	{\color{blue}The algorithm for the SD estimator mainly involves three standard least squares  and four filtered signals.}
	It avoids costly iterative optimization, making the SD estimator highly efficient in practice.
\end{rem}
{\color{blue}
\begin{rem}

	Note that the leading coefficient \(d_0 = 1\) of \(D^o(q)\) in the OE model \eqref{sea3} is implicitly accounted for in the  LS estimator \eqref{cls}.
	Indeed, by moving the signal \(u_{BF}^f(t)\) to the left-hand side, we can rewrite the OE model \eqref{sea3} as the following linear regression form:
	\begin{align}
		y_V^f(t) - u_{BF}^f(t) = \phi(t)^\top \theta_{cd}^o + e(t), \label{secdd}
	\end{align}
	where $\phi(t)=[-y_V^{of}(t-1),\cdots,-y_V^{of}(t-p_f),u_{BF}^f(t-1),\cdots,u_{BF}^f(t-p_b)]^T$ and $\theta_{cd}^o=[\theta_{c}^{oT},\theta_{d}^{oT}]^T$.
	Consequently, the parameter vectors \(\theta_c^o\) and \(\theta_d^o\) can be directly estimated from \eqref{secdd} using the estimated filtered signals \(\widehat{u}_{BF}^f(t)\), \(\widehat{y}_V^{of}(t)\), and \(\widehat{y}_V^f(t)\).

\end{rem}

\begin{rem}
	The consistent estimates \(\widehat{\theta}_n^v\), \(\widehat{\theta}_n^w\), \(\widehat{\theta}_n^b\), and \(\widehat{\theta}_n^f\) serve as filter coefficients that yield consistent approximations of the  unobservable signals \(\{\widehat{u}_V^f(t), \widehat{u}_{BF}^f(t),$ $ \widehat{y}_V^{of}(t), \widehat{y}_V^f(t)\}\), which together form two OE model structures used to decouple and estimate the dynamic and noise components of the BJ model.
\end{rem}
}

\subsection{Consistency and {\color{blue}asymptotic efficiency}}
\label{sec4.2}
In this subsection, we aim to establish the consistency and rate of convergence in probability for the SD estimator based on the convergence results for the ARX($\infty$) and OE models introduced in Section \ref{sec3}, and to prove the {\color{blue}asymptotic efficiency of the SDGN method}.

{\color{blue}
Before presenting the rate of convergence, we introduce a constant $\rho$ associated with the polynomials $C^o(q)$ and $F^o(q)$. Let $\{\rho_i,i=1,\cdots, p_c+p_f\}$ denote all the roots of the polynomials $C^o(q)$ and $F^o(q)$ and define
\begin{align}
	\rho \eq \max_{1 \leq i \leq p_c+p_f} |\rho_i|.\label{rho}
\end{align}}
\begin{thm}
	\label{thm1}
	Suppose that Assumptions \ref{assum1}--\ref{assum4} hold.
	Moreover, let the truncation order $m$   for  the ARX($\infty$) model \eqref{sea1} satisfies
	\begin{enumerate}[(i)]
		\item $m\xra{}\infty$ as $n\xra{}\infty$;
		\item {\color{blue}$m^{3+\kappa}/n\xra{}0$ as $n\xra{}\infty$ for some $\kappa>0$.}
	\end{enumerate}
	Thus, the SD estimator is consistent with {\color{blue} the rate of convergence in probability: }
	$
		\|\widehat{\theta}_n^{\rm sd} - \theta^o\|_2 = O_p(\delta_n).
	$
\end{thm}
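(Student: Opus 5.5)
The proof runs through the three least-squares stages in order, feeding each stage's error rate into Lemma~\ref{lm3} for the next.

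\textbf{Step (i): the ARX($\infty$) fit.} The BJ model \eqref{gm} is rewritten as the ARX($\infty$) model \eqref{sea1}, with $V^o=D^o/C^o$ and $W^o=B^oD^o/(F^oC^o)$. By Assumption~\ref{assum1}(ii), the roots of $C^o$ and $F^o$ lie in $|z|\le\rho<1$ with $\rho$ as in \eqref{rho}. Hence the impulse coefficients satisfy $|v_k^o|+|w_k^o|\le c\,k^{p}\rho^k$ for a constant $c$ and some integer $p$ accounting for root multiplicities. It follows that $\sum_k\sqrt{k}(|v_k^o|+|w_k^o|)<\infty$ and that $d_m=O(m^p\rho^m)$. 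Lemma~\ref{lm2} therefore applies under conditions (i)--(ii) on $m$ and gives $\|\widehat\theta_n^{vw}-\theta_{vw}^o\|_1=O_p(\delta_n)$.

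\textbf{Step (ii): the first OE fit for $B^o,F^o$.} I first bound the errors in the filtered signals. For example,
\[
\widehat u_V^f(t)-u_V^f(t)=\sum_{k\le m}(\widehat v_k-v_k^o)u(t-k)-\sum_{k>m}v_k^o u(t-k).
\]
This is bounded by $\|\widehat\theta_n^v-\theta_v^o\|_1\max_s|u(s)|+d_m\max_s|u(s)|$. The same argument handles $\widehat y_V^f$ and $\widehat y_V^{of}$, whose true counterpart is exactly $W^o(q)u(t)=y_V^{of}(t)$.

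The main technical obstacle is that $u$ and $y$ are not uniformly bounded, only stochastic. I would therefore use the $\sqrt{\log n/n}$-quasi-stationarity and $1/\sqrt n$-stability in Assumption~\ref{assum4}, together with the moment bound in Assumption~\ref{assum3}, to control the supremum. If $\max_{t\le n}|u(t)|$ and $\max_{t\le n}|y(t)|$ grow too fast, the fallback is to work at the level of the sample averages $\frac1n\widehat\Phi^T\widehat\Phi$ and $\frac1n\widehat\Phi^T\widehat y$ directly. Cauchy--Schwarz with $\frac1n\sum_t u(t-k)^2=O_p(1)$ uniformly in $k$ then gives a perturbation of order $O_p(\delta_n)$ for these averages. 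This is really what the proof of Lemma~\ref{lm3} needs, so I expect to restate that lemma in this averaged form and invoke it with $\zeta_n=\delta_n$.

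The remaining hypotheses of Lemma~\ref{lm3} come from the earlier assumptions. Coprimeness and stability follow from Assumption~\ref{assum1}(ii)--(iii). The noise in \eqref{sea2} is exactly $e(t)$, a martingale difference by Assumption~\ref{assum3}. Persistent excitation of the regressor $\phi(t)$, built from $y_V^{of}$ and $u_V^f$, follows from Assumption~\ref{assum4}(v), since the spectrum of $[r,e]$ is bounded below. In the closed-loop case, $u_V^f$ contains $e$, but only through past values, so the regressor at time $t$ is $\mathscr F_{t-1}$-measurable. Lemma~\ref{lm3} then yields $\|[\widehat\theta_n^{f};\widehat\theta_n^b]-\theta_{fb}^o\|=O_p(\max\{\delta_n,1/\sqrt n\})=O_p(\delta_n)$, because $m/\sqrt n\ge 1/\sqrt n$.

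\textbf{Step (iii): the second OE fit for $C^o,D^o$.} The signal $\widehat u_{BF}^f=B(q,\widehat\theta_n^b)F(q,\widehat\theta_n^f)^{-1}u$ must be controlled. Since $F^o$ is stable and $\widehat\theta_n^f\to\theta_f^o$, with probability tending to one the roots of $F(q,\widehat\theta_n^f)$ lie in $|z|\le(1+\rho)/2$. On that event the impulse response of $\widehat B/\widehat F-B^o/F^o$ has $\ell_1$ norm $O(\|\widehat\theta_n^{fb}-\theta_{fb}^o\|)=O_p(\delta_n)$, by a mean-value argument on the coefficients with uniform geometric decay. This gives an $O_p(\delta_n)$ error for $\widehat u_{BF}^f$; the errors for $\widehat y_V^{of}$ and $\widehat y_V^f$ were already bounded in step (ii). Applying Lemma~\ref{lm3} to the regression form \eqref{secdd} then gives $\|\widehat\theta^{cd}_n-\theta^o_{cd}\|=O_p(\delta_n)$. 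Its hypotheses hold as before, using Assumption~\ref{assum1}(iv), stability of $C^o$, and excitation of $u_{BF}^f$ inherited from $r$.

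\textbf{Step (iv): stacking.} Combining the two blocks gives $\|\widehat\theta_n^{\rm sd}-\theta^o\|_2=O_p(\delta_n)$, and $\delta_n\to0$ by condition (ii) on $m$ and $d_m\to0$. I expect the hardest part to be the uniform signal-error bound in step (ii), which requires adapting Lemma~\ref{lm3} to the averaged form, together with verifying persistent excitation of the auxiliary regressors in closed loop.
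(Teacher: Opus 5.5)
Your proposal is correct and follows the paper's proof essentially step for step. You use Lemma~\ref{lm2} for the ARX($\infty$) fit, then Lemma~\ref{lm3} with $\zeta_n=\delta_n$ on the OE model \eqref{sea2} and on the regression form \eqref{secdd} using the same filtered-signal error decompositions, and finally stack the blocks. Two of your additions tighten points the paper treats loosely, where it simply asserts pointwise $O_p(1)$ bounds on $|u(t-k)|$, $y(t)$ and $B(q,\widehat{\theta}_n^b)u(t)/(F(q,\widehat{\theta}_n^f)F^o(q))$. These are the averaged Cauchy--Schwarz form of the signal-error bounds, needed in closed loop since $u$ inherits the noise, and the check that $F(q,\widehat{\theta}_n^f)$ is stable with probability tending to one.
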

{\color{red!70!black} \text{Proof.} See proof in Appendix A.

Theorem \ref{thm1} demonstrates  that the  rate of convergence in probability of the SD  estimator is $O_p(\delta_n)$, which depends on the truncation order $m$ of the ARX($\infty$) model \eqref{sea1}.
The term $O_p(m/\sqrt{n})$ represents the square root of the variance, which increases monotonically as $m$ increases. On the other hand,  the term $O(\rho^m)$ represents the model approximation bias when using the ARX model \eqref{fsea1} of order $m$ to approximate the ARX($\infty$) model \eqref{sea1}, and this bias decreases monotonically as $m$ increases.
Therefore,  the SD will achieves its fastest rate when both $O_p(m/\sqrt{n})$ and $O(\rho^m)$ are of   the same   order.}

	{\color{blue}
		The following proposition further specifies the attainable rate  of convergence in probability of the SD estimator for typical choices of $m$, and identifies the fastest achievable rate.}
\begin{prop}
	\label{thm2}
	Suppose that Assumptions \ref{assum1}--\ref{assum5} hold.
	We have  the rate of convergence  of the SD  estimator  as follows:
	\begin{enumerate}[(i)]
		\item when $m=O(n^\tau)$ for any $0<\tau<1/2$, we have
		      \begin{align*}
			      \delta_n=O\Big(\frac1{n^{\frac12-\tau}}\Big),~~
			      {\color{blue}\|\widehat{\theta}_n^{\rm sd} - \theta^o\|_2 = O_p\Big(\frac1{n^{\frac12-\tau}}\Big)}
		      \end{align*}
		      since $\frac{m}{\sqrt{n}}=O\Big(\frac1{n^{\frac12-\tau}}\Big),~\rho^m=O(\rho^{n^\tau}),$ and $n^{\frac12-\tau}\rho^{n^\tau}\xra{}0$ as $n\xra{}\infty$;
		\item when $m=\alpha \log n$ with $\alpha=-1/(2\log \rho)>0$, we have
		      \begin{align*}
			      \delta_n=O\Big(\frac{\log n}{\sqrt{n}}\Big),~~{\color{blue}\|\widehat{\theta}_n^{\rm sd} - \theta^o\|_2 = O_p\Big(\frac{\log n}{\sqrt{n}}\Big)}
		      \end{align*}
		      since $\frac{m}{\sqrt{n}}=\frac{\alpha\log n}{\sqrt{n}}$ and $\rho^m=\frac{1}{\sqrt{n}}$;

		\item when $m=\alpha \log n - 2\alpha \log\log n$ with $\alpha=-1/(2\log \rho)>0$, we have
		      \begin{align*}
			      \delta_n=O\Big(\frac{\log n}{\sqrt{n}}\Big),~~{\color{blue}\|\widehat{\theta}_n^{\rm sd} - \theta^o\|_2 = O_p\Big(\frac{\log n}{\sqrt{n}}\Big)}
		      \end{align*}
		      since both $\frac{m}{\sqrt{n}}=\frac{\alpha\log n- 2\alpha \log\log n}{\sqrt{n}}$ and $\rho^m=\frac{\log n}{\sqrt{n}}$ achieve the same order.
	\end{enumerate}
\end{prop}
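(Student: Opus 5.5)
The plan is to combine Theorem \ref{thm1}, which gives $\|\widehat{\theta}_n^{\rm sd}-\theta^o\|_2=O_p(\delta_n)$ with $\delta_n=m/\sqrt{n}+d_m$, with an explicit geometric bound on the truncation tail $d_m$. Once $d_m=O(\rho^m)$ is established, each case reduces to comparing $m/\sqrt{n}$ with $\rho^m$ for the stated choice of $m$.

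\emph{Bounding the tail.} Since $V^o(q)=D^o(q)/C^o(q)$ and $W^o(q)=B^o(q)D^o(q)/(F^o(q)C^o(q))$, the only poles of $V^o$ and $W^o$ are roots of $C^o$ and $F^o$. By Assumption \ref{assum1}(ii) these lie strictly inside the unit circle, so $\rho<1$. Partial fractions (or the Cauchy estimate on a circle of radius $\rho+\epsilon$) give $|v_k^o|+|w_k^o|\le c\,k^{r}\rho^k$, where $r$ is bounded by the maximal root multiplicity minus one. For simple roots $r=0$, and summing the geometric tail gives $d_m=\sum_{k>m}(|v_k^o|+|w_k^o|)=O(\rho^m)$. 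With repeated roots one gets $O(m^{r}\rho^m)$ instead. This is absorbed by replacing $\rho$ with any $\rho'\in(\rho,1)$ in (i), and I would note the implicit simple-root convention needed for the exact constants in (ii)--(iii). This step is the main obstacle, since the statement as written quietly assumes it.

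\emph{Checking that the order conditions on $m$ are compatible with Theorem \ref{thm1}.} In case (i), $m=O(n^\tau)$ gives $m^{3+\kappa}/n\to0$ only if $\tau<1/(3+\kappa)$. So for $\tau\in[1/3,1/2)$ I would either rely on Lemma \ref{lm2}'s rate directly or remark that the rate claim still follows from the displayed bound. Logarithmic choices of $m$ trivially satisfy $m\to\infty$ and $m^{3+\kappa}/n\to0$.

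\emph{Case computations.}
\begin{enumerate}[(i)]
\item Take $m\asymp n^\tau$. Then $m/\sqrt{n}=O(n^{\tau-1/2})$. Also $n^{1/2-\tau}\rho^{n^\tau}=\exp\big((\tfrac12-\tau)\log n+n^\tau\log\rho\big)\to0$, because $n^\tau$ dominates $\log n$ and $\log\rho<0$. Hence $\delta_n=O(n^{-(1/2-\tau)})$.
\item Take $m=\alpha\log n$ with $\alpha=-1/(2\log\rho)$. Then $\rho^m=e^{\alpha\log n\log\rho}=n^{-1/2}$ and $m/\sqrt{n}=\alpha\log n/\sqrt{n}$. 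The sum is therefore $O(\log n/\sqrt{n})$.
\item Take $m=\alpha\log n-2\alpha\log\log n$. Then $\rho^m=n^{-1/2}\,e^{-2\alpha\log\log n\,\log\rho}=n^{-1/2}\log n$. Also $m/\sqrt{n}\le\alpha\log n/\sqrt{n}$. Both terms are $O(\log n/\sqrt{n})$, and they balance.
\end{enumerate}
In each case, substituting into Theorem \ref{thm1} yields the stated rate for $\|\widehat{\theta}_n^{\rm sd}-\theta^o\|_2$.
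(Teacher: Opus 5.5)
Your proposal follows the paper's route: the paper's proof is just Theorem~\ref{thm1} plus the tail bound $d_m=O(\rho^m)$ (used without proof) and the identities $\rho^{\alpha\log n}=n^{\alpha\log\rho}$ and $\rho^{\alpha\log\log n}=(\log n)^{\alpha\log\rho}$, which is exactly your case computation. Both of your caveats are genuine gaps in the statement that the paper's one-line proof ignores: repeated poles of $C^oF^o$ at modulus $\rho$ can make $d_m$ of order $m^r\rho^m$, which breaks (iii) for $r\ge1$ and (ii) for $r\ge2$; and $m\asymp n^\tau$ with $\tau\ge1/3$ violates $m^{3+\kappa}/n\to0$.

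One correction concerns your proposed escape for $\tau\in[1/3,1/2)$. Lemma~\ref{lm2} carries the same hypothesis $m^{3+\kappa}/n\to0$, so neither it nor the bound $O_p(\delta_n)$ is available there. With the paper's tools, case (i) is justified only for $\tau<1/3$.
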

{\color{red!70!black} \text{Proof.} See proof in Appendix A.}

We can summarize the following insights on the convergence rate of the SD  estimator  from {\color{blue}Proposition}  \ref{thm2}:
\begin{enumerate}[(i)]
	\item{\color{blue} The SD estimator is consistent for all three typical choices of the truncation order $m$ considered above, that is, $\|\widehat{\theta}_n^{\rm sd} - \theta^o\|_2 = o_p(1)$.

	\item  By examining the truncation orders successively from (i) to (iii),  } we find that the fastest achievable convergence rate is \( O_p\left( \log n/\sqrt{n}\right) \), {\color{blue} which is slightly slower than the standard parametric rate  \( O_p\left(1/\sqrt{n}\right) \).} This fastest rate arises when both  terms are of  the same   order.


	\item  Suppose the estimation error satisfies the upper bound \( \|\widehat{\theta}_n^{\rm sd} - \theta^o\|_2 \leq C_1\frac{m}{\sqrt{n}} + C_2\rho^m \), where \( C_1 \) and \( C_2 \) are   leading  constants. {\color{blue} Then among all choices of \(m\) that yield the optimal order
			      \(O_p\!\left( \log n/\sqrt{n}\right)\), the bound
			      $
				      C_1  \alpha \log n/\sqrt{n}
			      $
			      is asymptotically smallest (i.e., optimal in terms of the leading constant) when \(m = \alpha \log n\) with \(\alpha = -1/(2\log \rho) > 0\). This choice balances the two error sources while minimizing the dominant term in the upper bound.}
\end{enumerate}
{\color{blue}
Based on the rate presented in Proposition  \ref{thm2} and Lemma \ref{lm1}, we have the following results on the proposed SDGN method.
\begin{thm}
	\label{thm3}
	Suppose that Assumptions \ref{assum1}--\ref{assum5} hold and  the loss function $\ell_n(\theta)$ is three-times  differentiable  and $\big|\frac{\partial^3 \ell_n(\theta)}{\partial \theta_i\partial \theta_j\partial \theta_k} \big|$ are upper bounded by a uniform constant on a compact set of $\theta^o$.
	Let $\widehat{\theta}_n^{\rm{sdgn}}$ be the one-step GN refinement \eqref{gn} with $\widehat{\theta}_n$ replaced by   the SD estimator $\widehat{\theta}_n^{\rm{sd}}$.
	Then, $\widehat{\theta}_n^{\rm sdgn}$
	is asymptotically equivalent to the PEM if the truncation order $m$ asymptotically satisfies
	\begin{align*}
		-1/(2\log \rho) \log(n) - a_n \leq  m \leq C n^\gamma,
	\end{align*}
	where $a_n$ is any positive sequence satisfying $a_n=o(\log(n))$, $\rho$ is given by \eqref{rho}, $ \gamma$ is any constant satisfying $0< \gamma<1/4$, and $C$ is any positive constant.
\end{thm}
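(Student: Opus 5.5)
The plan is to reduce Theorem \ref{thm3} to Lemma \ref{lm1}. By that lemma, it suffices to show that the initial estimator $\widehat{\theta}_n^{\rm sd}$ satisfies $\widehat{\theta}_n^{\rm sd}-\theta^o=O_p(1/n^\nu)$ for some $\nu>1/4$. The hypotheses of Lemma \ref{lm1} (Assumptions \ref{assum1}--\ref{assum5}, together with the uniformly bounded third derivatives of $\ell_n$, which is a slightly stronger version of Assumption \ref{assum5}) are assumed directly in the statement. Theorem \ref{thm1} gives $\|\widehat{\theta}_n^{\rm sd}-\theta^o\|_2=O_p(\delta_n)$ with $\delta_n=m/\sqrt{n}+d_m$. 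Provided its order conditions hold, the whole task becomes showing $n^{\nu}\delta_n=O(1)$ for some $\nu>1/4$ under the stated window on $m$.

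First I check that Theorem \ref{thm1} applies. The lower bound $m\ge \alpha\log n-a_n$, with $\alpha=-1/(2\log\rho)$ and $a_n=o(\log n)$, forces $m\to\infty$. The upper bound $m\le Cn^\gamma$ with $\gamma<1/4$ gives $m^{3+\kappa}/n\le C^{3+\kappa}n^{\gamma(3+\kappa)-1}\to 0$ for any $\kappa\in(0,1/\gamma-3)$, and this interval is nonempty since $1/\gamma>4$.

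Second, I bound the variance term: $m/\sqrt{n}\le Cn^{\gamma-1/2}$, and $1/2-\gamma>1/4$.

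Third, I bound the truncation term $d_m=\sum_{k>m}|v_k^o|+|w_k^o|$. Here $V^o=D^o/C^o$ and $W^o=B^oD^o/(F^oC^o)$ are rational with poles among the roots of $C^o$ and $F^o$, all of modulus at most $\rho<1$. Their impulse response coefficients therefore satisfy $|v_k^o|+|w_k^o|\le c\,k^{r}\rho^k$, where $r$ accounts for repeated roots. Summing gives $d_m=O(m^{r}\rho^m)$; the polynomial factor can alternatively be absorbed by replacing $\rho$ with any $\bar\rho\in(\rho,1)$ at the cost of slightly shrinking exponents. With $m\ge\alpha\log n-a_n$ we get
\[
\rho^m\le \rho^{\alpha\log n}\rho^{-a_n}=n^{-1/2}e^{-a_n\log\rho}=n^{-1/2}e^{o(\log n)}=n^{-1/2+o(1)},
\]
and $m^{r}=n^{o(1)}$ because $m\le Cn^\gamma$ only contributes polynomial growth. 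Care is needed here: $m^r\le C^r n^{r\gamma}$ is not $n^{o(1)}$. I would handle this by noting that $d_m$ is decreasing in $m$. It therefore suffices to evaluate it at the lower endpoint $m_0=\alpha\log n-a_n$, where $m_0^r\rho^{m_0}=(\log n)^r n^{-1/2+o(1)}=n^{-1/2+o(1)}$. Hence $d_m=O(n^{-1/2+\epsilon})$ for every $\epsilon>0$.

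Finally, I combine the bounds. Choose $\nu\in(1/4,\min\{1/2-\gamma,\,1/2\})$ and take $\epsilon<1/2-\nu$. Then $n^\nu\delta_n\to 0$, so $\widehat{\theta}_n^{\rm sd}-\theta^o=O_p(n^{-\nu})$, and Lemma \ref{lm1} yields $\sqrt{n}(\widehat{\theta}_n^{\rm sdgn}-\widehat{\theta}_n^{\rm pem})=o_p(1)$.

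The main obstacle is the third step: making the decay bound on $d_m$ rigorous with the correct handling of the $o(\log n)$ slack and of root multiplicities. I would state a small lemma that the impulse response of a stable rational function with pole radius $\rho$ satisfies $|h_k|\le c_\epsilon(\rho+\epsilon)^k$. To keep the exponent exact, I would instead use the $k^r\rho^k$ bound and the monotonicity argument above. This shows that the $\epsilon$-loss is only the $n^{o(1)}$ factor, which is harmless because the margin $1/2-1/4$ is strictly positive.
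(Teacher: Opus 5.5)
Your proposal is correct and takes essentially the paper's route. For this theorem the paper gives only a one-line justification (the rate from Theorem~\ref{thm1}/Proposition~\ref{thm2} combined with Lemma~\ref{lm1}), and you supply the missing step: you show that the window on $m$ makes $\delta_n=m/\sqrt{n}+d_m=O(n^{-\nu})$ for some $\nu>1/4$. You use the upper bound on $m$ for the variance term and the lower bound, through monotonicity of the tail sum $d_m$ with root multiplicities accounted for, for the bias term; this is more careful than Proposition~\ref{thm2}, which treats only specific choices of $m$ and writes $d_m=O(\rho^m)$.
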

}{\color{red!70!black}
In practice, we can run the GN iterations until it converges for finite sample size.
Theorem \ref{thm1} guarantees that the SD estimator is consistent and converges to the true value at a rate of $O_p(\delta_n)$, so it lies within a small neighborhood of the true value. The GN iteration will stop after only a few steps.

}

\section{Numerical illustrations}
\label{sec5}
In this section, we conduct Monte Carlo simulations to evaluate the numerical performance of the proposed SD estimator and its  GN refinement SDGN  for  BJ model estimation. The results are compared with existing PEM, WNSF, MORSM, BJSM, and RIV methods.

{\color{blue}
\subsection{Asymptotic efficiency in both open-loop and closed-loop scenarios }
\label{sec5.1}
This example is to show that the proposed SD is consistent and  SDGN is asympototically efficient in  both  open-loop and closed-loop scenarios.
We use the same simulation settings given  in \cite[Section V.A]{Galrinho2019J1}.
Consider the BJ model \[\quad y(t) = \underbrace{\frac{q^{-1} + 0.1q^{-2}}{1 - 0.5q^{-1} + 0.75q^{-2}}}_{G^o(q)}u(t) + \underbrace{\frac{1 + 0.7q^{-1}}{1 - 0.9q^{-1}}}_{H^o(q)}e(t),
\] where \(\{e(t)\}\) is an independent Gaussian white sequences with unit variance.
For the open-loop data, the input  is generated by
$
	u(t) = \frac{1}{1 +  G^o(q)}r(t),
$
where \(\{r(t)\}\) is an independent Gaussian white sequences with unit variance.
For the closed-loop data, the input is
$
	u(t) = - y(t) + r(t)
$  and \(\{r(t)\}\) is an independent Gaussian white sequences with unit variance.

\begin{figure}[ht]
	\centering
	\includegraphics[scale=0.5]{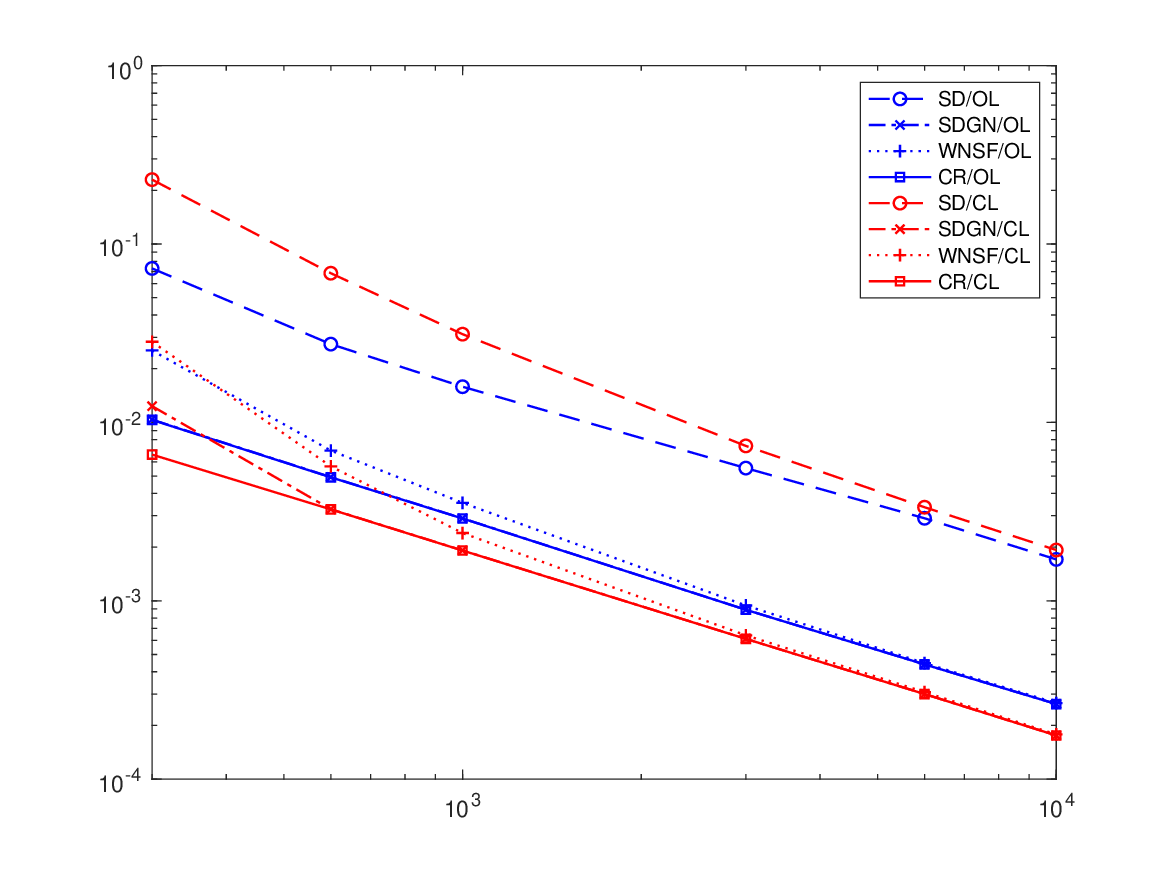}
	\caption{The average MSEs over 1000 Monte Carlo runs.}
	\label{mse1}
\end{figure}

We perform 1000 Monte Carlo runs for both open-loop and closed-loop data with sample sizes \(n=300, 600, 1000, 3000, 6000, 10000\) under zero initial conditions.
We illustrate the performance of the estimators: SD, SDGN, and WNSF.
We set the order of  the ARX model  involved both in the SD and WNSF estimators to be 50 for  open- and closed-loop data as used in \cite{Galrinho2019J1}.
We evaluate the performance of the three estimators by  the mean-squared error of the estimated parameter vector of the dynamic model, MSE = \(||\hat{\bar{\theta}}_n^{\rm fb} -  \theta^o_{fb}||^2\), where \(\hat{\bar{\theta}}_n^{\rm fb}\) is the corresponding estimate and $\theta^o_{fb}$ is the true value.

We present the average MSEs over 1000 Monte Carlo runs in Fig. \ref{mse1} as  function of   sample size, where the OL an CL denotes the open-loop and closed-loop for brevity.
Note that for open-loop data, the dash-dotted line with stars (SDGN) coincides with the solid line with squares (Cramér-Rao (CR) lower bound).
We find that the SD estimator is consistent as the sample size increases and further the SDGN   is asympotically efficient for both the open-loop and closed-loop data.
As illustrated in \cite{Galrinho2019J1},
the WNSF estimator is also asympotically efficient for both the open-loop and closed-loop data.
}

\subsection{Oscillatory BJ model estimation under low-pass open-loop excitation}
\label{sec5.2}
In this subsection, we illustrate  {\color{blue}
		the numerical performance of SD,  WNSF, MORSM, BJSM, and RIV methods and their GN refinements for a strong oscillatory BJ model estimation  under low-pass open-loop excitation.
	}

\subsubsection{Simulation settings}
We consider the   BJ model with strong oscillation as described in \cite{Zhu2016}:
\begin{align}
	y(t)  =\frac{q^{-1}+0.5q^{-2}-2q^{-3}+q^{-4}}{1-1.5q^{-1}+0.7q^{-2}+0.3q^{-3}-0.2q^{-4}}u(t) +\frac{1-0.6q^{-1}+0.4q^{-2}}{1-1.95q^{-1}+0.9506q^{-2}}e(t).
	\label{bjnum}
\end{align}
{\color{blue}
The input is an iid Gaussian random sequence with zero mean and unit variance filtered by the transfer function $1/(1-0.85q^{-1})^2$.
For each input realization \( u(t) \), the output \( y(t) \) is simulated using the BJ model \eqref{bjnum}, driven by the input \( u(t) \) and an iid Gaussian white noise sequence \( e(t) \). The variance of \( e(t) \) is chosen so that  the ratio of the squared sum  between the noise-free output  and the noise  $e(t) $ equals 3. We generate 500 independent realizations, each of length 20000.
To assess how   estimator  performance varies with sample size, we report simulation results for \( n = 2500, 5000, 10000, \) and \( 20000 \).
}

\subsubsection{Estimators}
\label{est}

We compare our estimators
\begin{itemize}
	\item SD: The SD estimator following the algorithm   described in Section \ref{4.2};
	\item SDGN: The estimator obtained through GN iteration using the SD estimator as its initial value;
\end{itemize}
with
the following estimators:
\begin{itemize}
	\item PEMd: The PEM   initialized by the default  value \cite{Ljung2012};

	      {\color{blue} \item PEMt: The PEM   initialized by the true value;

	\item WNSF: The WNSF estimator  developed in \cite{Galrinho2019J1};

	\item PEMw: The PEM   initialized by the WNSF estimator;

	\item MORSM: The MORSM estimator following the method proposed in \cite{Everitt2018} with one iteration;

	\item PEMm: The PEM   initialized by the MORSM estimator;}

	\item BJSM: The BJSM estimator  implemented by following the  settings outlined in \cite[Section 5]{Zhu2016};

	\item PEMb: The PEM  initialized by the BJSM estimator.

	\item RIV: The RIV estimator developed in \cite{Young2015}, implemented using the command {\ttfamily rivbj} in the CAPTAIN Toolbox for MATLAB\footnote{The CAPTAIN Toolbox can be downloaded   from https://wp.lancs.ac.uk/captaintoolbox.}.

	\item PEMr: The PEM  initialized by the RIV estimator.

\end{itemize}

All PEM-based estimators, including SDGN, PEMd, PEMt, PEMw, PEMm, PEMb, and PEMr, are implemented using MATLAB’s System Identification Toolbox via   the {\ttfamily bj} command  with the SearchMethod option set to `gn' in MATLAB's System Identification Toolbox \cite{Ljung2012}. Each estimator is initialized with its corresponding initial   estimator.
The stopping criteria for all the PEM-based estimators  and the BJSM estimator are set to  a maximum of 100 iterations unless the tolerance reaches $10^{-4}$  \cite{Galrinho2019J1}.

All the computations were executed on a MacBook Air equipped with an Apple M2 chip and 24GB RAM under the Matlab 2023b platform.
\begin{table*}[h]
	\centering
	{\color{blue}
		\caption{The average   fits of all estimators   among 500 realizations  under different sample sizes.}
		\label{fit1}
		\resizebox{\textwidth}{!}{%
		\begin{tabular}{ccccclcccrcccc}
			\hline
			$n$         & SD    & SDGN & WNSF & PEMw & MORSM & PEMm & BJSM & PEMb & RIV & PEMr & PEMd & PEMt \\ \hline
			2500        & 34.51 &
			55.29       &
			20.59       &
			41.04       &
			32.77 (0)   &
			52.20       &
			30.27       &
			51.97       &
			-88.32      &
			-77.77      &
			46.74       &
			69.52                                                                                            \\ \hline
			5000        &
			37.64       &
			71.33       &
			21.91       &
			58.59       &
			39.51 (92)  &
			68.79       &
			34.81       &
			71.02       &
			-79.92      &
			-71.53      &
			59.96       &
			80.53                                                                                            \\ \hline
			10000       &
			40.43       &
			82.36       &
			24.46       &
			69.44       &
			50.78 (197) &
			83.64       &
			36.16       &
			83.77       &
			-119.83     &
			-73.99      &
			63.87       &
			86.86       &                                                                                    \\ \hline
			20000       &
			43.68       &
			90.74       &
			26.51       &
			75.19       &
			61.16 (234) &
			90.27       &
			37.49       &
			90.66       &
			-65.96      &
			-59.58      &
			67.95       &
			91.06       &                                                                                    \\ \hline
		\end{tabular}}
	}

\end{table*}

\begin{table}[H]
	\centering
	{\color{blue}
		\caption{The average running times of the estimators  without using GN refinement  among 500 realizations under different sample sizes  (Unit: Seconds).}
		\label{time}
		\begin{tabular}{cccccc}
			\hline
			$n$    & SD     & WNSF & MORSM & BJSM & RIV \\ \hline
			2500   & 0.0044 &
			0.0051 &
			0.0069 &
			0.0723 &
			1.8308                                      \\ \hline
			5000   &
			0.0061 &
			0.0069 &
			0.0090 &
			0.0905 &
			3.7309                                      \\ \hline
			10000  &
			0.0101 &
			0.0113 &
			0.0133 &
			0.1269 &
			9.1367                                      \\ \hline
			20000  &
			0.0177 &
			0.0206 &
			0.0222 &
			0.3594 &
			19.4684                                     \\ \hline
		\end{tabular}
	}
\end{table}

\begin{table}[H]
	\centering
	{\color{blue}
		\caption{The average number of iterations of the estimators using GN refinement  among 500 realizations  under different sample sizes.}
		\label{max_iter}
		\vspace{1.5ex}
		\begin{tabu} to 0.7
			\textwidth{cccccccc}
			\hline
			$n$   & SDGN & PEMw & PEMm & PEMb & PEMr & PEMd & PEMt \\ \hline
			2500  &
			10.08 &
			10.58 &
			11.52 &
			9.20  &
			12.72 &
			27.26 &
			7.93
			\\ \hline
			5000  &
			7.89  &
			8.57  &
			8.96  &
			7.43  &
			10.88 &
			29.89 &
			5.87                                                   \\ \hline
			10000 &
			6.47  &
			7.08  &
			6.62  &
			6.05  &
			8.58  &
			28.73 &
			4.24                                                   \\ \hline
			20000 &
			5.89  &
			6.58  &
			5.15  &
			5.34  &
			7.01  &
			26.97 &
			3.33                                                   \\ \hline
		\end{tabu}}
\end{table}

\begin{figure}[ht]
	\centering
	\begin{subfigure}[b]{0.49\linewidth}
		\includegraphics[width=\linewidth]{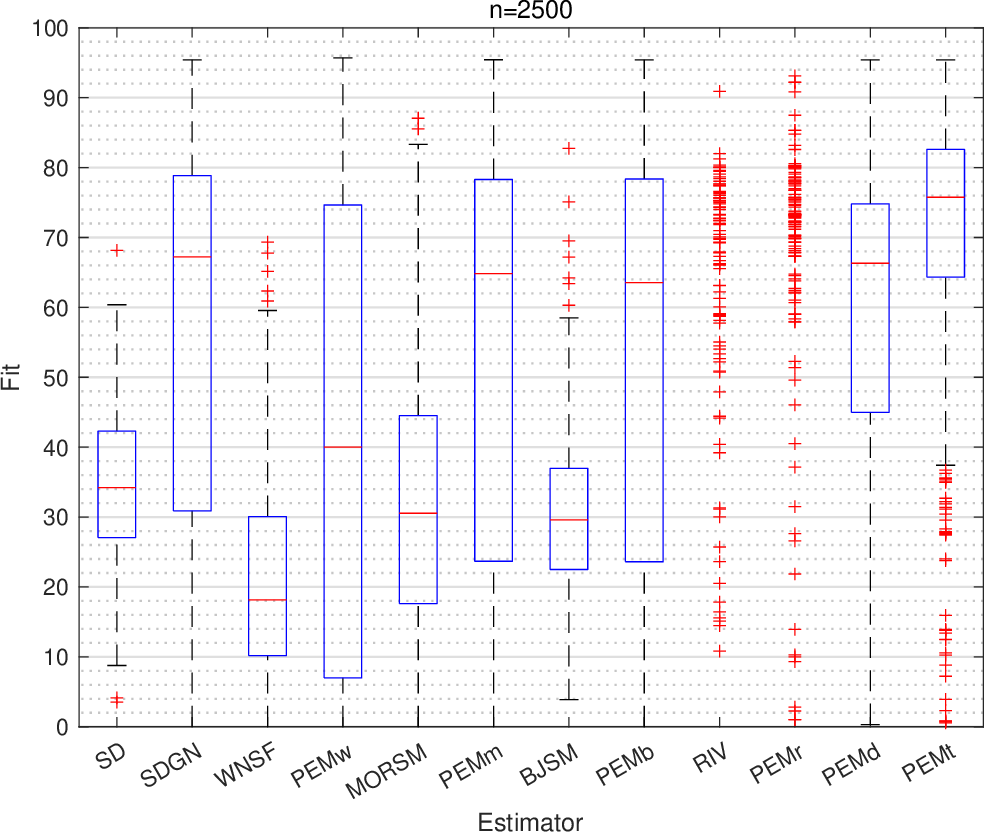}
		\caption{$n=2500$}
		\label{fit_2500_aic}
	\end{subfigure}\hfill
	\begin{subfigure}[b]{0.49\linewidth}
		\includegraphics[width=\linewidth]{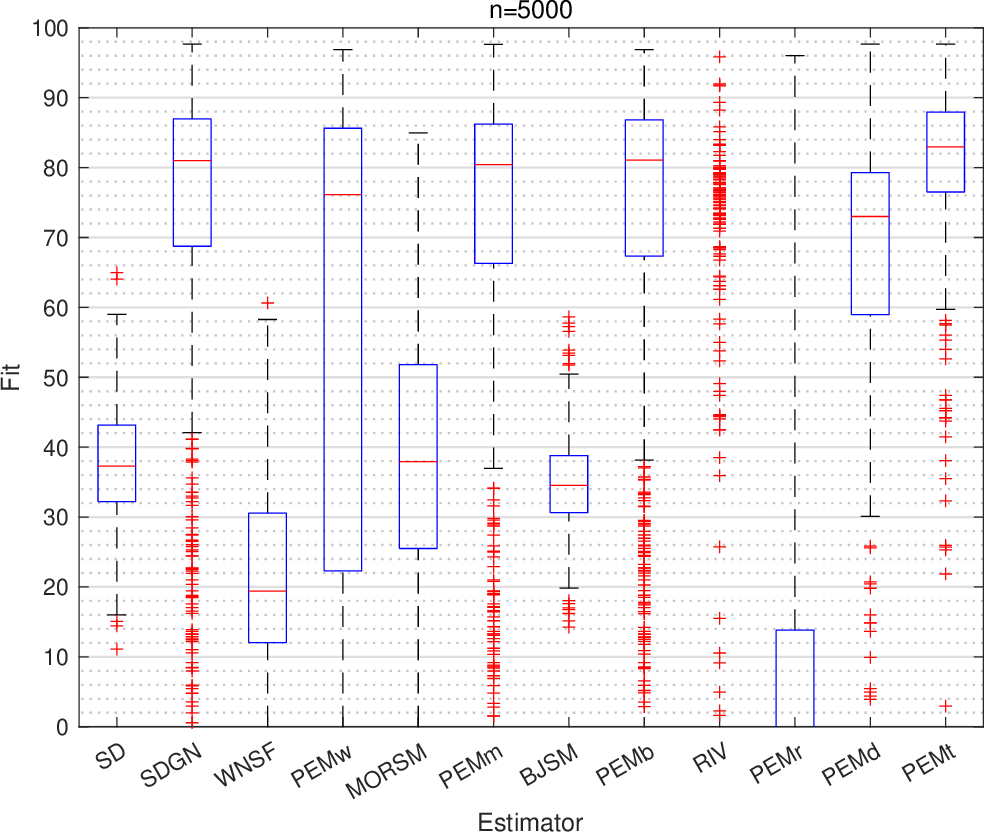}
		\caption{$n=5000$}
		\label{fit_5000_aic}
	\end{subfigure}

	\vspace{2mm}

	\begin{subfigure}[b]{0.49\linewidth}
		\includegraphics[width=\linewidth]{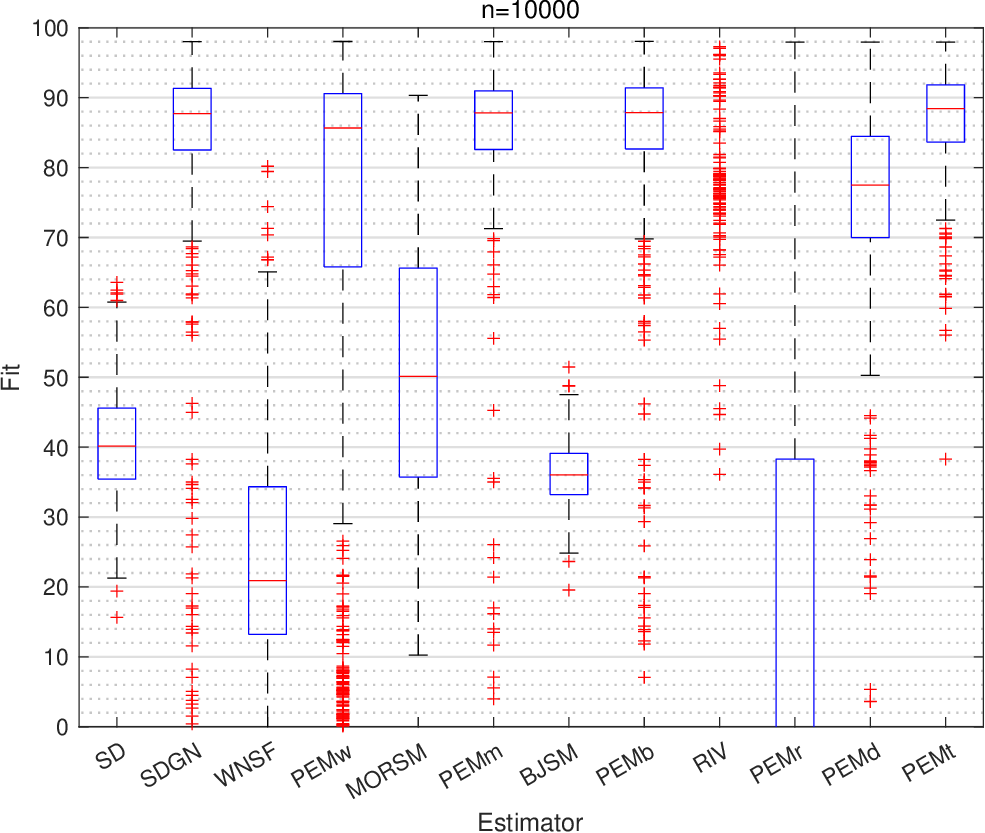}
		\caption{$n=10000$}
		\label{fit_10000_aic}
	\end{subfigure}\hfill
	\begin{subfigure}[b]{0.49\linewidth}
		\includegraphics[width=\linewidth]{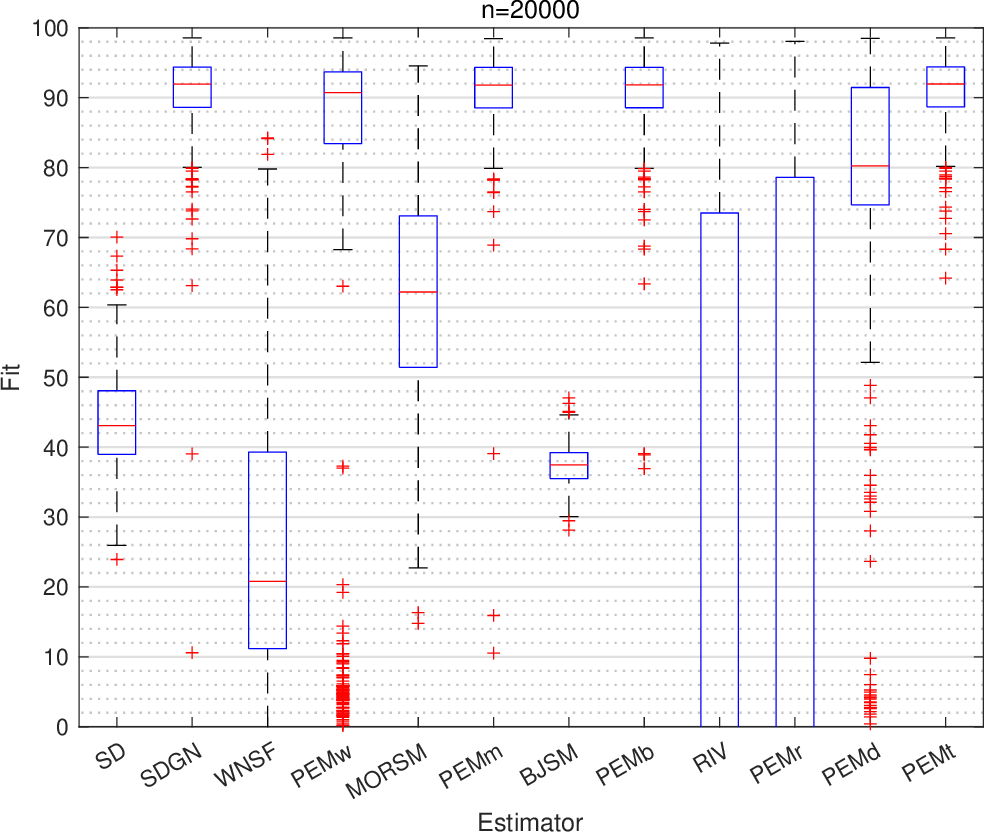}
		\caption{$n=20000$}
		\label{fit_20000_aic}
	\end{subfigure}
	\caption{The boxplot of the fits given by the estimators under different sample sizes.}
	\label{fit_aic}
\end{figure}

{\color{blue}
\subsubsection{Truncation model order selection}
The truncation order \( m \) of the ARX(\(\infty\)) model plays a critical role in the estimation accuracy of the SD, WNSF, MORSM, and BJSM estimators, and thus must be chosen carefully.
From a theoretical standpoint, \( m \) should grow to infinity at a logarithmic rate as the sample size \( n \to \infty \) to balance estimation variance and truncation bias.
For practical implementation, we select \( m \) using the Akaike Information Criterion (AIC) applied to the ARX(\(\infty\)) model \eqref{arx2}. Specifically, we define the AIC-based estimate as
\begin{align}
	\widehat{m} = \argmin_{m=1,2,\cdots,n}~
	n\log\left(\frac1n
	\big\|Y-X\widehat{\theta}_n^{vw}\big\|^2
	\right) + 4m,\!\!\label{m1}
\end{align}
where $\widehat{\theta}_n^{vw}$ is the LS estimate given by \eqref{ls}.
The factor \( 4m \) accounts for the total number of estimated parameters (two polynomials of order \( m \) each).
The value \( \widehat{m} \) obtained from \eqref{m1} is then used uniformly across all four estimators—SD, WNSF, MORSM, and BJSM—to ensure a fair comparison. To reduce computational burden in the simulations, the search for \( \widehat{m} \) is restricted to the grid \( \{10, 20, \dots, 150\} \).

}

\subsubsection{Performance measures}
We evaluate the estimators based on three criteria: estimation accuracy, computational complexity, {\color{blue}and number of iterations required for GN refinement:}
\begin{enumerate}[(i)]
	\item Estimation accuracy: Accuracy is measured using the Fit metric \cite{Ljung2012}, defined as
	      \begin{align*} \mbox{\rm Fit} =100\times \left( 1 -
		      \frac{\|\widehat{\theta}_n -
			      \theta^o \|}{\|\theta^o -\overline{\theta^o}\|}\right),
	      \end{align*}
	      where $\widehat{\theta}_n$ denotes the {\color{blue}estimate} produced by a given estimator,  $\theta^o$ is the true parameter vector of the model \eqref{bjnum}, and $\overline{\theta^o}$ is its arithmetic mean of $\theta^o$;

	\item Computational complexity without GN iterations: We report the running time (in seconds) of the SD, WNSF, MORSM, BJSM, and RIV estimators. {\color{blue}For fairness, the reported times for SD, WNSF, MORSM, and BJSM exclude the shared preprocessing step of estimating the truncation order
	      via \eqref{m1}, as this computation is common to all four methods.
	      In contrast, the running time for the RIV estimator corresponds to the execution of MATLAB’s {\ttfamily rivbj} command and is included for completeness rather than direct comparison, as RIV follows a fundamentally different estimation paradigm.}

	\item {\color{blue}Number of GN iterations: We record the number of GN iterations required for convergence in the refinement stage of the following PEM-based estimators: SDGN, PEMw, PEMm, PEMb, PEMr, PEMd, and PEMt.}

\end{enumerate}


\begin{figure}[ht]
	\centering
	\begin{subfigure}[b]{0.49\linewidth}
		\includegraphics[width=\linewidth]{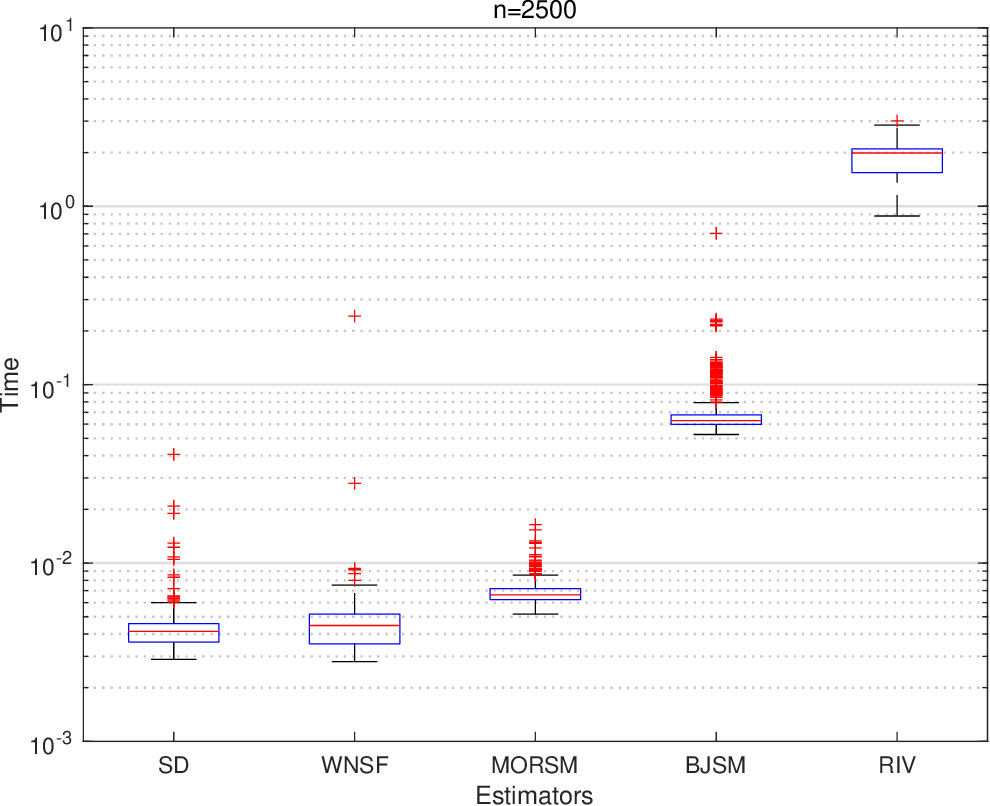}
		\caption{$n=2500$}
		\label{time_2500_aic}
	\end{subfigure}\hfill
	\begin{subfigure}[b]{0.49\linewidth}
		\includegraphics[width=\linewidth]{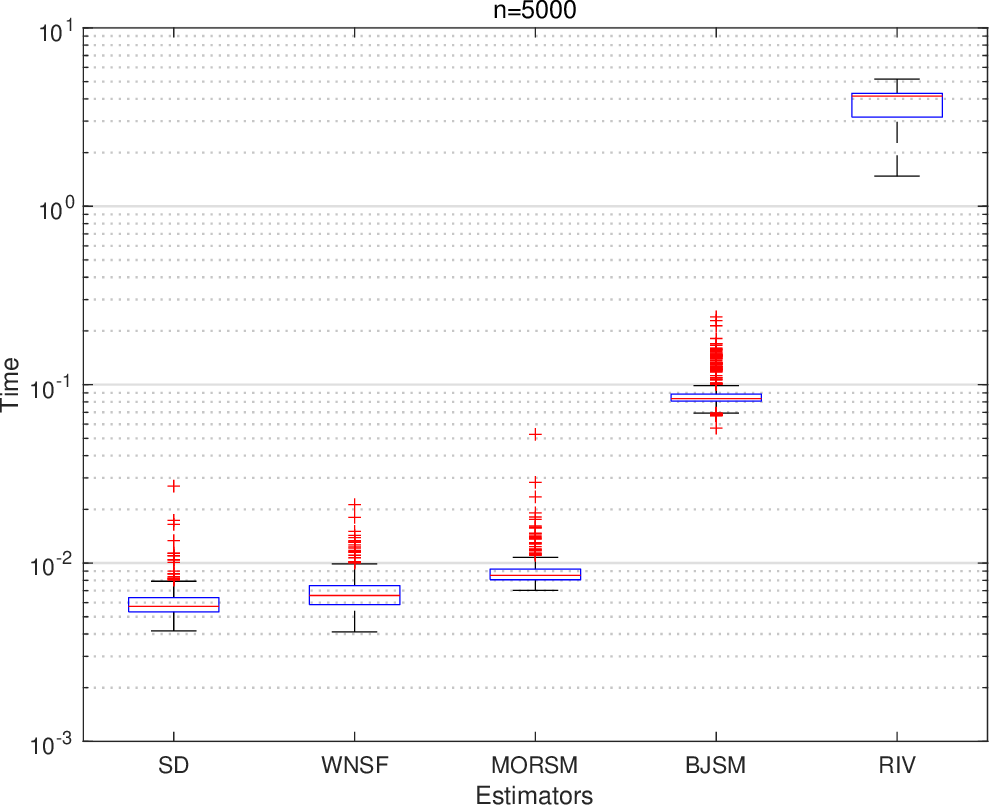}
		\caption{$n=5000$}
		\label{time_5000_aic}
	\end{subfigure}

	\vspace{2mm}

	\begin{subfigure}[b]{0.49\linewidth}
		\includegraphics[width=\linewidth]{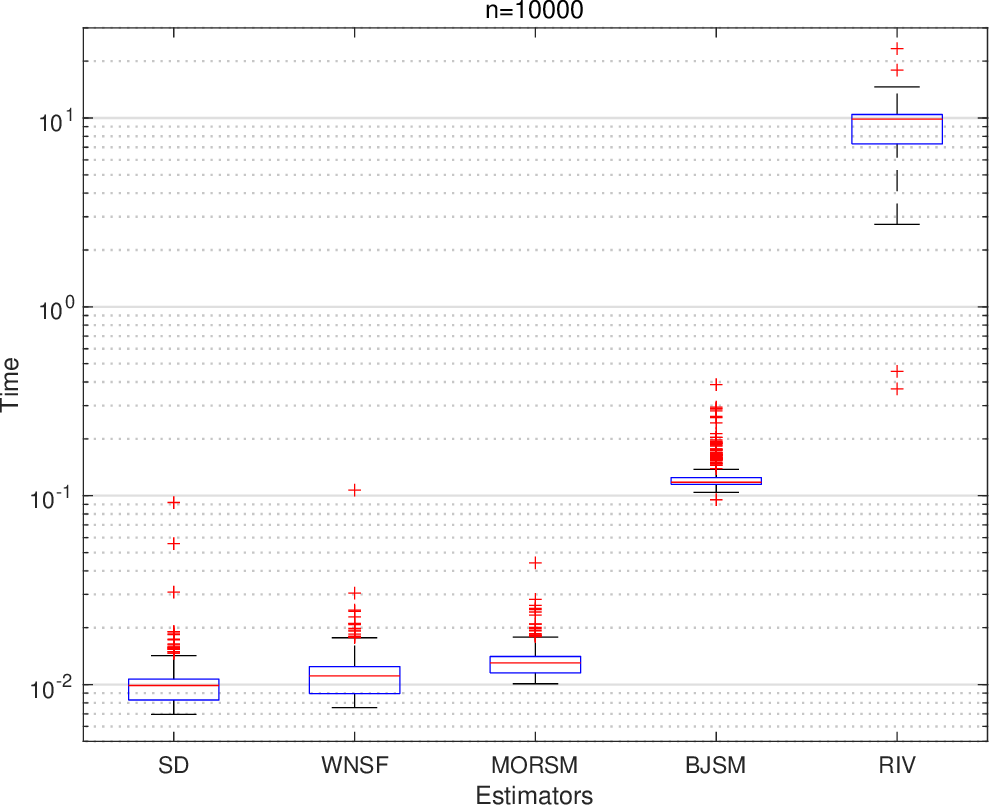}
		\caption{$n=10000$}
		\label{time_10000_aic}
	\end{subfigure}\hfill
	\begin{subfigure}[b]{0.49\linewidth}
		\includegraphics[width=\linewidth]{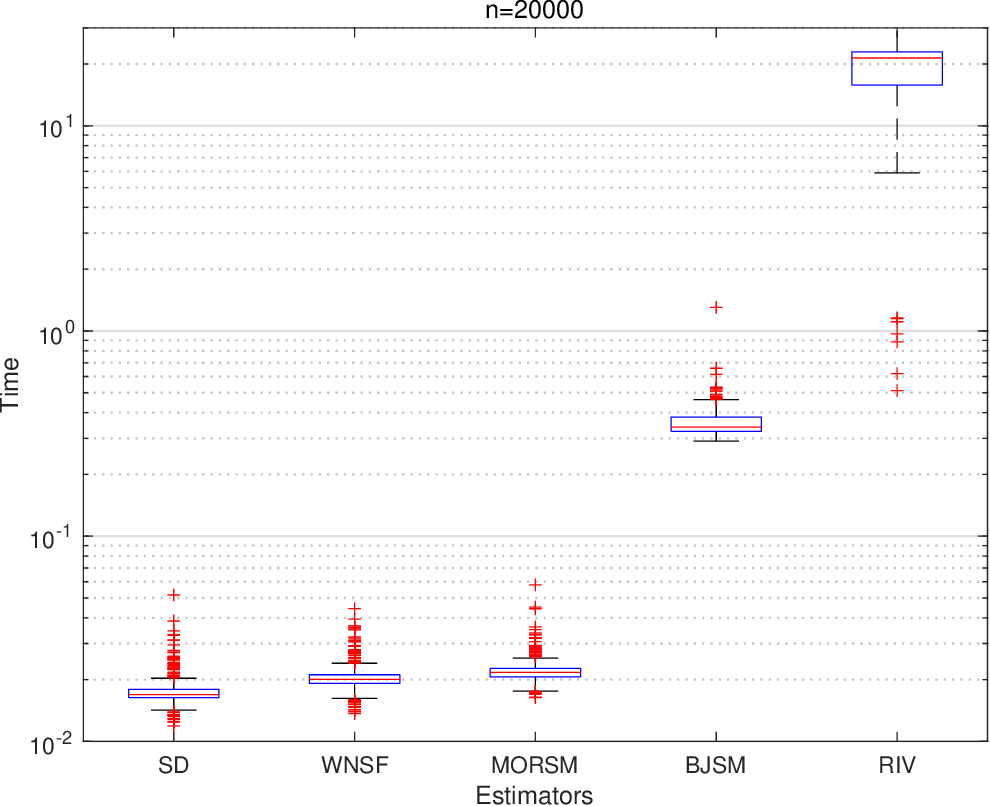}
		\caption{$n=20000$}
		\label{time_20000_aic}
	\end{subfigure}
	\caption{The boxplot of the running times of the estimators without using GN refinement under different sample sizes.}
	\label{time_aic}
\end{figure}
\subsubsection{Simulation results}
{\color{blue}We present simulation results of these estimators according to the estimation accuracy, running time, and number of iterations:
}
\begin{enumerate}[(i)]
	\item The boxplots of the fit values obtained by all estimators are presented in Figs.~\ref{fit_2500_aic}–\ref{fit_20000_aic}, and the corresponding average fits across all simulation cases are summarized in Table~\ref{fit1}.
	      {\color{blue}We observed that the initial estimate of the noise model from the MORSM estimator can be numerically unstable in some realizations. This is because the true noise model has poles at 0.98 and 0.97 (very close to the unit circle), the  one iteration occasionally fails, producing \texttt{NaN} values.
	      To ensure meaningful visualization, these \texttt{NaN} outcomes are excluded from the boxplots and average fits for MORSM. The number of such failures (out of 500 Monte Carlo realizations) is reported in parentheses in the MORSM column of Table~\ref{fit1}.}

	\item
	      The boxplots of the running times for the SD, WNSF, MORSM, BJSM, and RIV estimators are displayed in Figs.~\ref{time_2500_aic}–\ref{time_20000_aic}, and the corresponding average running times are reported in Table~\ref{time}.

	\item
	      {\color{blue}The boxplots of the number of GN iterations for SDGN, PEMw, PEMm, PEMb, PEMr, PEMd, and PEMt are shown in Figs.~\ref{max_iter_2500_aic}–\ref{max_iter_20000_aic}, and the corresponding average iteration counts are reported in Table~\ref{max_iter}.}

\end{enumerate}

\begin{figure}[!ht]
	\centering
	\begin{subfigure}[b]{0.49\linewidth}
		\includegraphics[width=\linewidth]{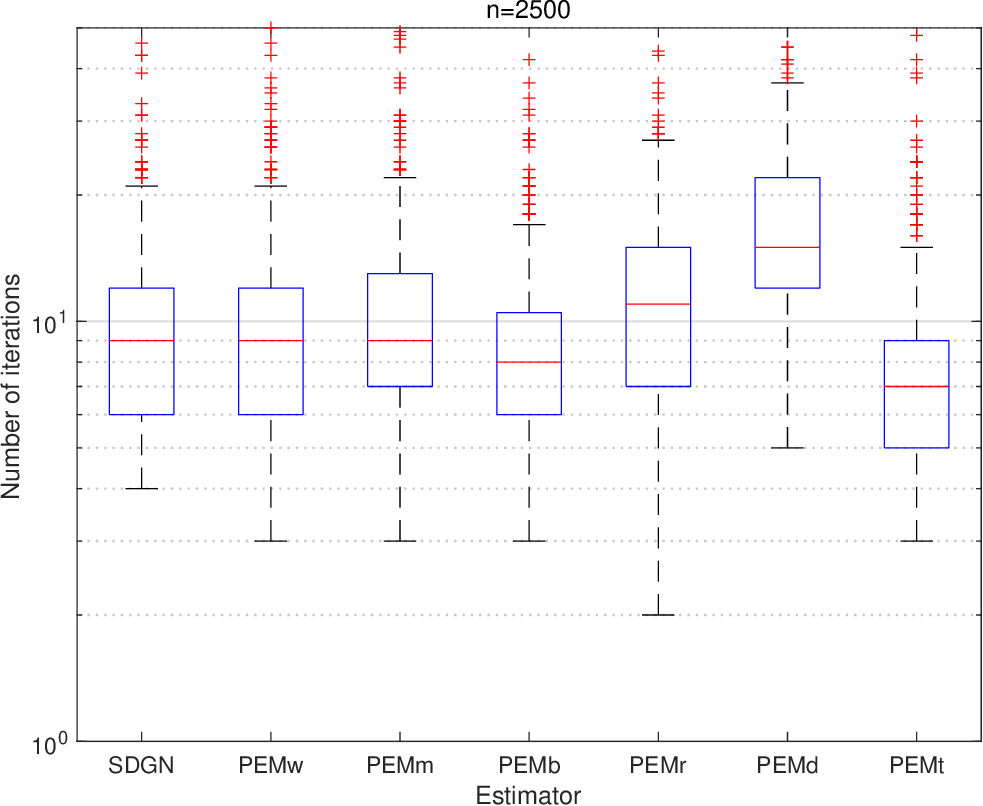}
		\caption{$n=2500$}
		\label{max_iter_2500_aic}
	\end{subfigure}\hfill
	\begin{subfigure}[b]{0.49\linewidth}
		\includegraphics[width=\linewidth]{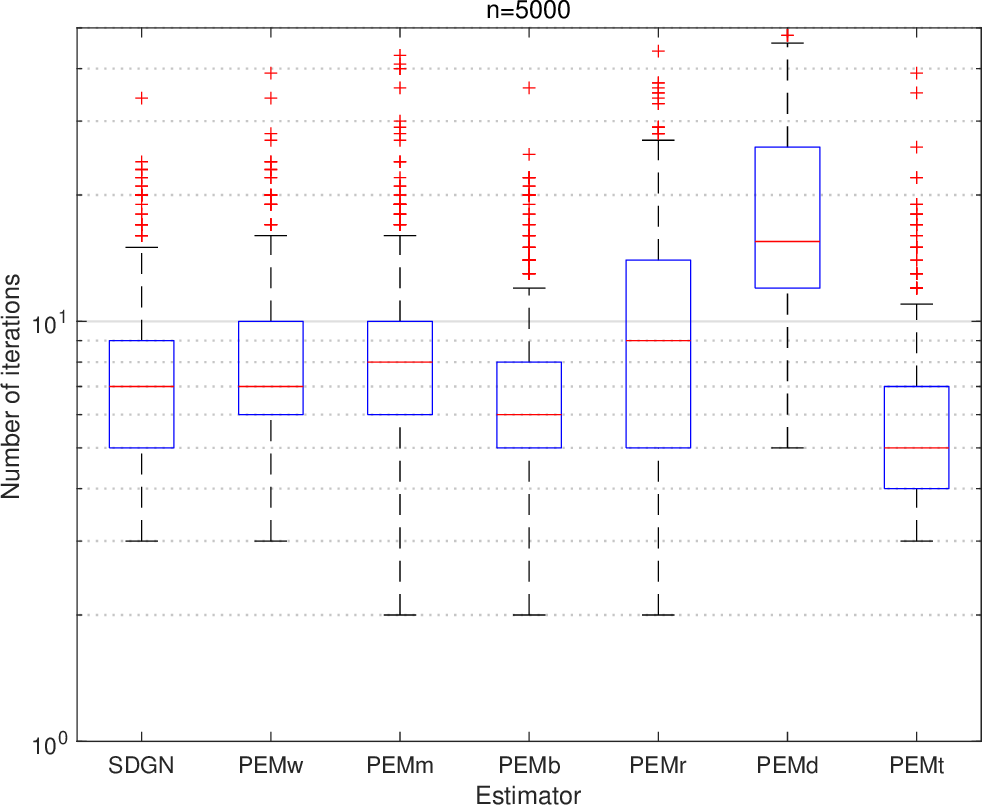}
		\caption{$n=5000$}
		\label{max_iter_5000_aic}
	\end{subfigure}

	\vspace{2mm}

	\begin{subfigure}[b]{0.49\linewidth}
		\includegraphics[width=\linewidth]{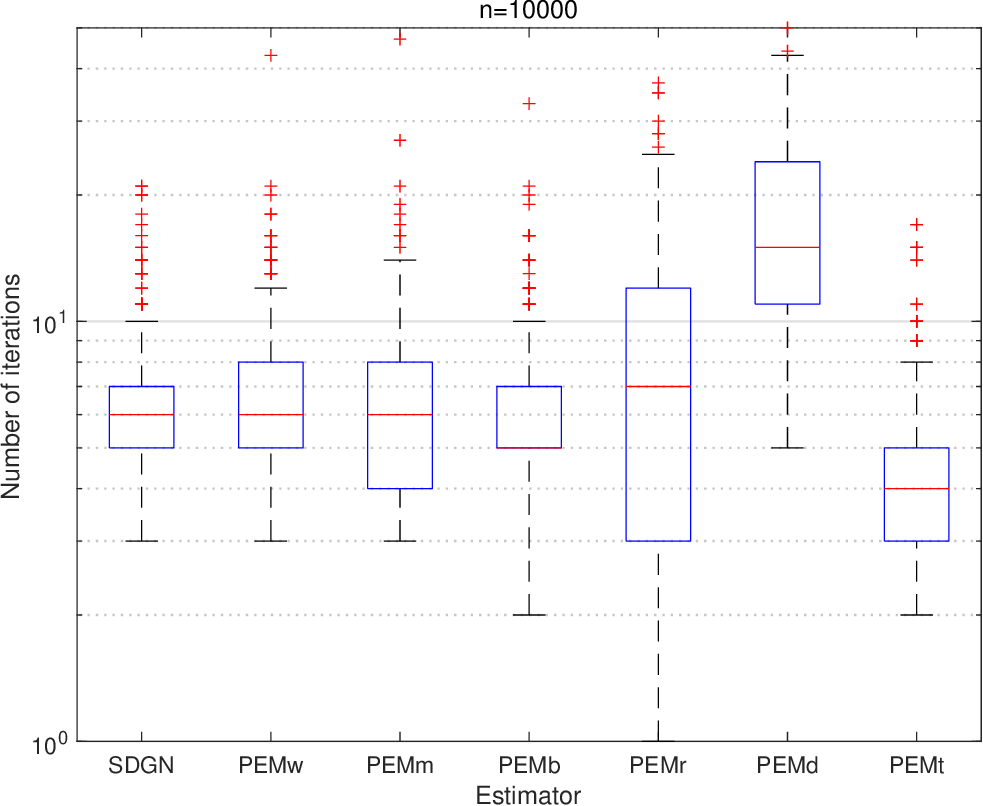}
		\caption{$n=10000$}
		\label{max_iter_10000_aic}
	\end{subfigure}\hfill
	\begin{subfigure}[b]{0.49\linewidth}
		\includegraphics[width=\linewidth]{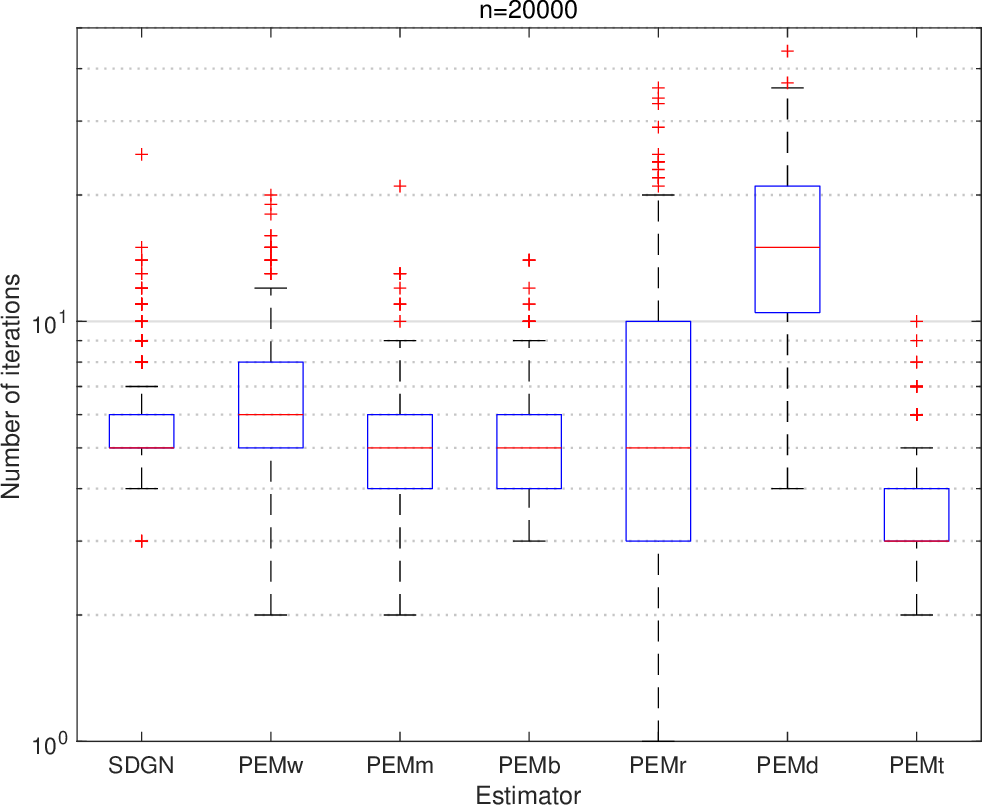}
		\caption{$n=20000$}
		\label{max_iter_20000_aic}
	\end{subfigure}
	\caption{The boxplot of the number of iterations for the estimators using GN refinement under different sample sizes.}
	\label{max_iter_aic}
\end{figure}

\subsubsection{Findings}
Based on the simulation results, we
summarize the   key findings for these estimators:

\begin{enumerate}[(i)]

	\item {\color{red!70!black}The SD estimator consistently outperforms the WNSF, BJSM, and RIV in terms of fit across all sample sizes.}

	\item All five estimators—SD, WNSF, MORSM, BJSM, and RIV—benefit from GN refinement, which consistently improves their estimation accuracy.  {\color{blue}In particular, SDGN, PEMm, and PEMb nearly attain the performance of the best-performing estimator  PEMt as sample size increases, while PEMw and PEMr does not exhibit this behaviour.

	\item The estimation accuracy of all estimators   improves with increasing sample size.

	\item  In terms of computational time, the SD estimator is slightly faster than both WNSF and MORSM across all sample sizes. Overall, SD, WNSF, and MORSM exhibit comparable computational complexity and are more substantially efficient than  BJSM.
	      Moreover, the numbers of GN iterations for the PEM-based estimators: SDGN, PEMw, PEMm, PEMb, and PEMr, show no significant differences across the tested scenarios.
	} 
	
%
\end{enumerate}


{\color{blue}
	\subsection{Random BJ model estimation under low-pass open-loop excitation}
	\label{sec5.3}
	Consider the BJ model  \eqref{gm}
		with the orders $p_b=4,~p_c=2,~p_d=2,~p_f=4$.
		The polynomials are randomly generated in the following way.
		The coefficients of \( B^o(q) \) are drawn independently from a uniform distribution on \([-1, 1]\).
		The roots of \( C^o(q) \), \( D^o(q) \), and \( F^o(q) \) all lie within an annulus: each root’s magnitude is sampled uniformly from \([0.6, 0.95]\), and its phase is sampled uniformly from \([0^\circ, 90^\circ]\), with the corresponding complex conjugate also included to ensure real-valued coefficients.
		
		For each realization, we generate a random BJ model following the way described above and an iid Gaussian white noise
		input sequence with zero mean and unit variance filtered by the transfer function $1/(1-0.8q^{-1})^2$.  And the output \( y(t) \) is simulated by the generated random BJ model  using the filtered input \( u(t) \) and an iid Gaussian white noise sequence \( e(t) \), where the variance of \( e(t) \) is chosen so that  the signal-to-noise  ratio
		\begin{align*}
			\frac{\sum_{t=1}^n \big[B^o(q)/F^o(q)u(t)\big]^2}
			{\sum_{t=1}^n \big[C^o(q)/D^o(q)e(t)\big]^2}
			=5.
		\end{align*}
		In total, we generate 500 independent realizations  with length 20,000.
		
		We evaluate the performance of the estimators described in Section~\ref{est} over 500 Monte Carlo realizations.
		Simulation results are reported for subsample sizes $n=2500 , 5000 , 10000 ,$ and $20000$ to illustrate how the performance of all estimators evolves with increasing sample size.
		For saving space, we
		only report the average (median) fits, average running times, and average number of iterations of these estimators   in Tables \ref{fit3}--\ref{max_iter3}.
		
		The simulation results convey a message largely consistent with Section~\ref{sec5.2}, with one notable difference: all SD, WNSF, MORSM, and BJSM have lower fit than RIV. However, after GN refinement, their performance becomes comparable to that of PEMr for large sample sizes.

	}

\begin{table*}[h]
	\centering
	{\color{blue}
		\caption{The average (median)   fits of all estimators   among 500 realizations  under different sample sizes.}
		\label{fit3}
		\setlength{\tabcolsep}{6pt}
		\resizebox{\textwidth}{!}{%
		\begin{tabular}{cccccccccccccc}
			\hline
			$n$                           & SD                            & SDGN & WNSF & PEMw & MORSM & PEMm & BJSM & PEMb & RIV & PEMr & PEMd & PEMt \\ \hline
			\rule{0pt}{5mm}2500           & \stackanchor{-33.38}{(25.62)} &
			\stackanchor{12.42}{(81.43)}  &
			\stackanchor{-53.20}{(11.75)} &
			\stackanchor{-7.78}{(77.44)}  &
			\stackanchor{-39.98}{(45.33)} &
			\stackanchor{25.79}{(83.11)}  &
			\stackanchor{-24.34}{(37.83)} &
			\stackanchor{10.71}{(82.02)}  &
			\stackanchor{-20.18}{(73.37)} &
			\stackanchor{20.76}{(81.70)}  &
			\stackanchor{4.99}{(74.42)}   &
			\stackanchor{36.19}{(85.32)}                                                                                                               \\[3mm] \hline
			\rule{0pt}{5mm}5000           &
			\stackanchor{-15.05}{(33.23)} &
			\stackanchor{31.71}{(87.00)}  &
			\stackanchor{-39.81}{(19.21)} &
			\stackanchor{26.76}{(85.06)}  &
			\stackanchor{-10.47}{(60.64)} &
			\stackanchor{30.64}{(87.75)}  &
			\stackanchor{-9.07}{(42.42)}  &
			\stackanchor{31.82}{(87.41)}  &
			\stackanchor{23.59}{(85.05)}  &
			\stackanchor{43.60}{(87.43)}  &
			\stackanchor{8.26}{(86.53)}   &
			\stackanchor{64.00}{(88.63)}                                                                                                               \\[3mm] \hline
			\rule{0pt}{5mm}10000          &
			\stackanchor{6.04}{(41.91)}   &
			\stackanchor{43.33}{(91.65)}  &
			\stackanchor{-16.96}{(29.37)} &
			\stackanchor{51.05}{(90.89)}  &
			\stackanchor{17.96}{(76.72)}  &
			\stackanchor{48.33}{(91.86)}  &
			\stackanchor{1.42}{(47.10)}   &
			\stackanchor{45.91}{(91.69)}  &
			\stackanchor{27.97}{(90.70)}  &
			\stackanchor{47.26}{(91.69)}  &
			\stackanchor{47.75}{(90.86)}  &
			\stackanchor{62.52}{(91.95)}                                                                                                               \\[3mm] \hline
			\rule{0pt}{5mm}20000          &
			\stackanchor{24.73}{(50.10)}  &
			\stackanchor{64.29}{(94.43)}  &
			\stackanchor{6.06}{(46.53)}   &
			\stackanchor{60.92}{(94.31)}  &
			\stackanchor{39.52}{(87.28)}  &
			\stackanchor{62.92}{(94.53)}  &
			\stackanchor{7.94}{(47.15)}   &
			\stackanchor{61.01}{(94.53)}  &
			\stackanchor{57.53}{(94.02)}  &
			\stackanchor{62.62}{(94.54)}  &
			\stackanchor{51.65}{(94.09)}  &
			\stackanchor{74.86}{(94.72)}                                                                                                               \\[3mm] \hline
		\end{tabular}}
	}

\end{table*}

\begin{table}[H]
	\centering
	{\color{blue}
		\caption{The average running times of the estimators  without using GN refinement  among 500 realizations under different sample sizes  (Unit: Seconds).}
		\label{time3}
		\begin{tabular}{cccccr}
			\hline
			$n$    & SD     & WNSF & MORSM & BJSM & RIV \\ \hline
			2500   & 0.0045 &
			0.0045 &
			0.0082 &
			0.1138 &
			1.9407                                      \\ \hline
			5000   &
			0.0058 &
			0.0065 &
			0.0100 &
			0.1357 &
			3.2580                                      \\ \hline
			10000  &
			0.0090 &
			0.0086 &
			0.0138 &
			0.1736 &
			5.8904                                      \\ \hline
			20000  &
			0.0138 &
			0.0135 &
			0.0205 &
			0.2386 &
			12.4884                                     \\ \hline
		\end{tabular}
	}
\end{table}

\begin{table}[H]
	\centering
	{\color{blue}
		\caption{The average number of iterations of the estimators using GN refinement  among 500 realizations  under different sample sizes.}
		\label{max_iter3}
		\vspace{1.5ex}
		\begin{tabu} to 0.7
			\textwidth{cccccccc}
			\hline
			$n$   & SDGN & PEMw & PEMm & PEMb & PEMr & PEMd & PEMt \\ \hline
			2500  &
			10.26 &
			10.08 &
			7.83  &
			7.03  &
			5.70  &
			13.56 &
			6.94
			\\ \hline
			5000  &
			8.48  &
			9.33  &
			6.02  &
			5.84  &
			4.24  &
			12.10 &
			5.19                                                   \\ \hline
			10000 &
			7.38  &
			7.58  &
			5.00  &
			5.34  &
			3.33  &
			10.91 &
			4.66                                                   \\ \hline
			20000 &
			6.60  &
			6.52  &
			4.35  &
			4.89  &
			2.59  &
			9.90  &
			3.62                                                   \\ \hline
		\end{tabu}}
\end{table}


\section{Conclusions}
\label{sec6}

{\color{red!70!black}In this paper, we propose a new consistent and asymptotically efficient estimation method, called SDGN, as a possible alternative to the  WNSF approach for BJ model estimation under both open-loop and closed-loop data, particularly in scenarios where the high-order ARX regressor matrix becomes ill-conditioned under low-pass input excitation.}{\color{blue} 
		The SDGN method combines  SD estimator with GN refinement step. The use of GN iterations significantly simplifies the design and theoretical analysis of the initial estimator, as it only requires consistency (not asymptotic efficiency) from the SD stage.
		The SD estimator is rooted in the model reduction framework: it leverages a nonparametric high-order ARX estimate to construct filtered input–output signals and then sequentially recovers the dynamic and noise components of the BJ model by solving two  OE subproblems via LS.
		We establish that the SD estimator is consistent under both open-loop and closed-loop conditions, relying on the consistency of the LS estimator for OE models. Furthermore, one-step GN iteration suffices to refine this initial estimate to asymptotic efficiency, matching the optimal performance of both the WNSF and PEM.}

	{\color{blue}
		Simulation results confirm that the proposed SDGN  outperforms WNSF under low-pass input excitation.
		At the same time, BJSM, MORSM, and RIV also demonstrate strong empirical performance, indicating that their theoretical properties, particularly noise model estimation for BJSM and MORSM and their potential extension to closed-loop settings, deserve further investigation. }

\medskip

\appendix

\textbf{Appendix A: Proofs of main results}
\setcounter{thm}{0}
\renewcommand{\thethm}{A\arabic{thm}}
\setcounter{lem}{0}
\renewcommand{\thelem}{A\arabic{lem}}
\setcounter{rem}{0}
\renewcommand{\therem}{A\arabic{rem}}
\renewcommand{\thesection}{A}

This section includes the proof  of main results in the paper.
\subsection{Proof of Theorem \ref{thm1}}
We prove the theorem by following the steps of the algorithm for the SD estimator.

\subsubsection{Rate of convergence  $ \widehat{\theta}_n^v$ and $\widehat{\theta}_n^{w}$}

Consider the ARX\((\infty)\) model \eqref{sea1}.
We establish the convergence rates of the estimators \(\widehat{\theta}_n^v\) and \(\widehat{\theta}_n^{w}\) by verifying the conditions of Lemma~\ref{lm2}.

{\color{blue}First, the stability of \(D^o(q)\) and \(F^o(q)\), imposed in Assumption~\ref{assum1} for the BJ model \eqref{gm}, implies that the infinite impulse response sequences \(\{v_k^o\}\) and \(\{w_k^o\}\) associated with the ARX\((\infty)\) model \eqref{sea1} decay exponentially. Consequently,
$
	\sum_{k=1}^\infty \sqrt{k}\,|v_k^o| < \infty \quad \text{and} \quad \sum_{k=1}^\infty \sqrt{k}\,|w_k^o| < \infty.
$
Moreover, Assumptions~\ref{assum3}–\ref{assum4} concerning the noise and input for the BJ model \eqref{gm} also hold for the ARX\((\infty)\) model \eqref{sea1}.
Together with the assumption on the truncation order \(m\), these conditions satisfy the requirements of Lemma~\ref{lm2}, thereby yielding the desired convergence rates for \(\widehat{\theta}_n^v\) and \(\widehat{\theta}_n^{w}\):}
\begin{align}
	\big\|\widehat{\theta}_n^{v}- \theta_{v}^o\big\|_1
	=O_p(\delta_n),~\big\|\widehat{\theta}_n^{w}- \theta_{w}^o\big\|_1
	=O_p(\delta_n).\label{rtvw}
\end{align}

\subsubsection{Rate of convergence of $ \widehat{\theta}_n^b$ and $ \widehat{\theta}_n^{f}$}
\label{A12}

Consider the OE model \eqref{sea2} {\color{blue} with input \(u_V^f(t)\), noise-free output \(y_V^{of}(t)\), and observed output \(y_V^f(t)\).
In the following, we verify that the assumptions required for Lemma~\ref{lm3} hold for this OE model by successively checking the corresponding conditions originally imposed on the BJ model.

First, parts (ii) and (iii) of Assumption~\ref{assum1} imply condition (i) of Lemma~\ref{lm3}.
Second, Assumption~\ref{assum3} ensures condition (ii) of Lemma~\ref{lm3}.
Third, by (ii) and (iv) of Assumption~\ref{assum1} as well as Assumption~\ref{assum4}, the input \(u_V^f(t)\) is persistently exciting of order \(p_f + p_b\), and the regressor
$
	\phi(t) = \big[ -y_V^{of}(t-1), \dots, -y_V^{of}(t-p_f),\, u_V^f(t-1), \dots, u_V^f(t-p_b) \big]^\top
$
associated with the OE model \eqref{sea2} is uncorrelated with the noise \(e(t)\), thereby verifying condition (iii) of Lemma~\ref{lm3}.

Finally, we establish condition (iv) of Lemma~\ref{lm3} for the OE model \eqref{sea2}.
Specifically, we show that the estimated signals
$
	\widehat{u}_V^f(t) \triangleq V(q,\widehat{\theta}_n^{v})u(t), ~
	\widehat{y}_V^{of}(t) \triangleq W(q,\widehat{\theta}_n^{w})u(t), ~
	\widehat{y}_V^f(t) \triangleq V(q,\widehat{\theta}_n^{v})y(t)
$
converge in probability to their true counterparts at the rate \(O_p(\delta_n)\).}

For the estimated sequence $\{\widehat{u}_V^f(t)\}$, by the rate of convergence rate \eqref{rtvw}, we have for all $t=1,\cdots,n$
\begingroup
\allowdisplaybreaks
\begin{align}
	\nonumber
	  |\widehat{u}_V^f(t)-u_V^f(t)|
	&=|\big(V(q,\widehat{\theta}_n^{v}) - V^o(q)\big)u(t)|          \\
	\nonumber
	 & =\Big|\sum_{k=1}^m
	\big(\widehat{v}_k - v^o_k\big) u(t-k) - \sum_{k=m+1}^\infty
	v^o_k u(t-k)\Big|                                              \\
	\nonumber
	\nonumber
	 & \leq \sum_{k=1}^m
	\big|\widehat{v}_k - v^o_k\big| |u(t-k)|+\sum_{k=m+1}^\infty
	\big|v^o_k\big| | u(t-k)|                                      \\
	\nonumber
	 & \leq \sum_{k=1}^m
	\big|\widehat{v}_k - v^o_k\big|O_p(1) +\sum_{k=m+1}^\infty
	\big|v^o_k\big| O_p(1)                                         \\
	\nonumber
	 & =\|\widehat{\theta}_n^{v}- \theta_{v}^o\|_1O_p(1) +O_p(d_m) \\
	 & = O_p(\delta_n) +O_p(d_m) =O_p(\delta_n).\label{a1}
\end{align}
\endgroup
For the estimated sequence $\{\widehat{y}_V^{of}(t)\}$,  by the rate of convergence rate \eqref{rtvw} again, similarly there holds that for all $t=1,\cdots,n$
\begingroup
\allowdisplaybreaks
\begin{align}
	|\widehat{y}_V^{of}(t)  -y_V^{of}(t)|  =|\big(W(q,\widehat{\theta}_n^{v}) - W^o(q)\big)u(t)| =O_p(\delta_n).\label{a12}
\end{align}
\endgroup
For the estimated sequence $\{\widehat{y}_V^f(t)\}$,  by the rate of convergence rate \eqref{rtvw} again,   there holds that for all $t=1,\cdots,n$
\begingroup
\allowdisplaybreaks
\begin{align}
	|\widehat{y}_V^f(t)  -y_V^f =|\big(V(q,\widehat{\theta}_n^{v}) - V^o(q)\big)y(t)| =O_p(\delta_n)\label{a13}
\end{align}
\endgroup
by noting   $y(t)=O_p(1)$.
Therefore,  Lemma \ref{lm3} yields
\begingroup
\allowdisplaybreaks
\begin{subequations}
	\label{rt}
	\begin{align}
		\label{rtb}
		 & \|\widehat{\theta}_n^{b}-\theta_{b}^o\|_2=\max\{O_p(\delta_n),O_p(1/\sqrt{n})\}
		=O_p(\delta_n),                                                                    \\
		 & \|\widehat{\theta}_n^{f}-\theta_{f}^o\|_2
		=\max\{O_p(\delta_n),O_p(1/\sqrt{n})\}
		=O_p(\delta_n).
		\label{rtf}
	\end{align}
\end{subequations}
\endgroup

\subsubsection{Rate of convergence of $\widehat{\theta}^c_n$
	and $\widehat{\theta}^{d}_n$}

Consider the OE model \eqref{sea3} {\color{blue} with input \(u_{BF}^f(t)\), noise-free output \(y_V^{of}(t)\), and observed output \(y_V^f(t)\).
In the following, we verify that the conditions of Lemma~\ref{lm3} hold for this OE model by leveraging the assumptions originally imposed on the BJ model.

Similar to Section~\ref{A12}, conditions (i)–(iii) of Lemma~\ref{lm3} can be verified for the OE model \eqref{sea3} using Assumptions~\ref{assum1}–\ref{assum4}.
Consequently, it remains only to verify condition (iv) of Lemma~\ref{lm3} for the estimated signals
$
	\widehat{u}_{BF}^f(t),$ ~ $\widehat{y}_V^{of}(t),$    and  $\widehat{y}_V^f(t)
$
associated with the OE model \eqref{sea3}.

Since the convergence rates of \(\widehat{y}_V^{of}(t)\) and \(\widehat{y}_V^f(t)\) have already been established in Section~\ref{A12}, it suffices to establish the convergence rate of the sequence \(\widehat{u}_{BF}^f(t)\).
This is proved by}
\begingroup
\allowdisplaybreaks
\begin{align*}
	\nonumber
	 |\widehat{u}_{BF}^f(t)-u_{BF}^f(t)| &=\left|\left(\frac {B(q,\widehat{\theta}^b_n)}{F(q,\widehat{\theta}^f_n)} - \frac{B^o(q)}{F^o(q)}\right)u(t)\right| \\
	\nonumber
	 & \leq\left|\left(\frac {B(q,\widehat{\theta}^b_n)}{F(q,\widehat{\theta}^f_n)} - \frac{B(q,\widehat{\theta}^b_n)}{F^o(q)}\right)u(t)\right|             +\left|\left(\frac {B(q,\widehat{\theta}^b_n)}{F^o(q)} - \frac{B^o(q)}{F^o(q)}\right)u(t)\right|                                              \\
	\nonumber
	 & =\left|\left(\frac {1}{F(q,\widehat{\theta}^f_n)} - \frac{1}{F^o(q)}\right)B(q,\widehat{\theta}^b_n)u(t)\right|                                        +\left|\big( B(q,\widehat{\theta}^b_n)  -  B^o(q)\big)\frac1{F^o(q)}u(t)\right|                                                               \\
	 & =O_p(\delta_n)
\end{align*}
\endgroup
for all $t=1,\cdots,n$.
Here, two upper bounds in probability are used.  One is
\begingroup
\allowdisplaybreaks
\begin{align*}
	   \left|\left(\frac {1}{F(q,\widehat{\theta}^f_n)} - \frac{1}{F^o(q)}\right)B(q,\widehat{\theta}^b_n)u(t)\right|  
	 & =
	\left|\big(F(q,\widehat{\theta}^f_n) - F^o(q)\big)
	\frac{B(q,\widehat{\theta}^b_n)}{F(q,\widehat{\theta}^f_n)F^o(q)}u(t)\right|                                      \\
	 & \leq\|\widehat{\theta}_n^{f}-\theta_{f}^o\|_2O_p(1)=O_p(\delta_n).
\end{align*}
\endgroup
Another is
\begingroup
\allowdisplaybreaks
\begin{align*}
	   \left|\big( B(q,\widehat{\theta}^b_n)  -  B^o(q)\big)\frac1{F^o(q)}u(t)\right|  
	 & \leq \sqrt{\sum_{k=1}^{p_b}
	\big(\widehat{b}_k - b^o_k\big)^2} \sqrt{\sum_{k=1}^{p_b}u_F^f(t-k)^2}            \\
	 & =\|\widehat{\theta}_n^{b}-\theta_{b}^o\|_2O(1)=O_p(\delta_n)
\end{align*}
\endgroup
by the Cauchy-Swarchz inequality, where $u_F^f(t) \eq \frac1{F^o(q)}u(t)$.

Therefore,  Lemma \ref{lm3} yields
\begin{align}
	\label{rtd}
	\|\widehat{\theta}_n^{d}-\theta_{d}^o\|_2
	=O_p(\delta_n), ~
	\|\widehat{\theta}_n^{c}-\theta_{c}^o\|_2
	=O_p(\delta_n).
\end{align}

\subsubsection{Rate of convergence of the SD estimator  }

Combining the rates \eqref{rt} and 	\eqref{rtd} achieves
\begin{align}
	\|\widehat{\theta}_n^{\rm sd} - \theta^o\|_2 = O_p(\delta_n).
\end{align}
This completes the proof.

\subsection{Proof of Proposition \ref{thm2}}

The proof is straightforward using the following formulas:
\begin{align*}
	\rho^{\alpha \log (n)} = n^{\alpha \log (\rho)}, \quad
	\rho^{\alpha \log \log (n)} = (\log n)^{\alpha \log (\rho)}.
\end{align*}
Accordingly, the details are omitted.

\section*{Appendix B: Proofs of auxiliary lemmas}

\renewcommand{\thesection}{B}

\setcounter{equation}{0}
\setcounter{lem}{0}
\renewcommand{\thelem}{B\arabic{lem}}
\setcounter{rem}{0}
\renewcommand{\therem}{B\arabic{rem}}

This appendix contains the proofs of Lemmas \ref{lm2} and \ref{lm3} that are applied to proving Theorem \ref{thm1}.
\setcounter{subsection}{0}
\subsection{Proof of Lemma \ref{lm2}}

{\color{red!70!black}
First, by \cite[Theorem 6.1]{Ljung1992}, under Assumptions \ref{assum3} and \ref{assum4} and the conditions on the truncation order $m$, we have
\begin{align}
	E\big(\big\|\widehat{\theta}_n^{vw}-\bar{\theta}_{vw}\big\|_1\big)
	=O
	\left(
	\frac{m}{\sqrt{n}} +d_mm
	\sqrt{\frac{\log(n)}{n}}
	\right),\label{a11}
\end{align}
where $\bar{\theta}_{vw}$ is defined in Eq.~(5.1) of \cite{Ljung1992}.
Second, by \cite[Lemma 5.1]{Ljung1992}, under the same assumptions, we have
}
{\color{blue}
\begin{align}
	\big\|\bar{\theta}_{vw}-\theta^o_{vw}\big\|_1 =O(d_m).\label{a22}
\end{align}
By combining \eqref{a11} with \eqref{a22}, we have
\begin{align*}
	E\big(  \big\|\widehat{\theta}_n^{vw}-\theta^o_{vw}\big\|_1\big)
&	\leq
	E\big(\big\|\widehat{\theta}_n^{vw}-\bar{\theta}_{vw}\big\|_1\big)
	+
	\big\|\bar{\theta}_{vw}-\theta^o_{vw}\big\|_1                     \\
	       & =O
	\left(
	\frac{m}{\sqrt{n}} +d_m\frac{m}{\sqrt{n}} {\sqrt{\log(n)}}
	\right) +O(d_m)                                                   \\
	       & =O
	\left(
	\frac{m}{\sqrt{n}} +d_m
	\right)=O(\delta_n)
\end{align*}
due to $m^{3+\kappa}/n\xra{}0$.
Thus, there exists a constant $C$ such that  $E\big(\big\|\widehat{\theta}_n^{vw}-\theta^o_{vw}\big\|_1/\delta_n\big)
	\leq C $.
By Markov inequality, for any $\epsilon>0$ choose $M>C/\epsilon$.
It takes place
\begin{align*}
	P
	\left(
	\frac{\|\widehat{\theta}_n^{vw}-\theta^o_{vw}\big\|_1}{\delta_n}>M
	\right)\leq \frac{C}{M}<\epsilon.
\end{align*}
This means that $ \|\widehat{\theta}_n^{vw}-\theta^o_{vw}\big\|_1=O_p(\delta_n)$.
}

\subsection{Proof of Lemma \ref{lm3}}

We rewrite the linear model \eqref{lmoe} as
\begin{align*}
	\widehat{y}(t) = \widehat{\phi}(t)^\top \theta_{fb}^o +
	\underbrace{e(t) + \overbrace{\widehat{y}(t) - y(t) + \big(\phi(t) - \widehat{\phi}(t)\big)^\top \theta_{fb}^o}^{\chi(t)}}_{\varpi(t)},
\end{align*}
where $\chi(t) = O_p(\zeta_n)$ by condition (iv) of Lemma~\ref{lm3}.
Applying condition (iv) once again, we obtain
\begingroup
\allowdisplaybreaks
\begin{align}
	\frac1n\widehat{\Phi}^T\varpi
	 & =\frac1n\sum_{t=1}^n
	\begin{pmatrix}
		-\widehat{y}^o(t-1)   \\
		\vdots                \\
		-\widehat{y}^o(t-p_f) \\
		\widehat{u}(t-1)      \\
		\vdots                \\
		\widehat{u}(t-p_b)
	\end{pmatrix}
	\big(e(t)+\chi(t)\big)  =\frac1n\sum_{t=1}^n
	\begin{pmatrix}
		-y^o(t-1)e(t)   \\
		\vdots          \\
		-y^o(t-p_f)e(t) \\
		u(t-1)e(t)      \\
		\vdots          \\
		u(t-p_b)e(t)
	\end{pmatrix}
	+O_p(\zeta_n).\label{c3}
\end{align}
\endgroup

{\color{blue}By conditions (ii) and (iii) of Lemma~\ref{lm3}, the regressor $\{\phi(t)\}$ is uncorrelated with the noise sequence $\{e(t)\}$. Consequently, for any fixed $1 \leq i \leq p_f$ and $1 \leq j \leq p_b$, the sequences $\{-y^o(t-i)e(t)\}$ and $\{u(t-j)e(t)\}$ are zero-mean, uncorrelated, and have finite variance.  }
Therefore, by Lemma~\ref{lmc3}, each component of the first term in \eqref{c3} is $O_p(1/\sqrt{n})$.
	{\color{blue}It follows that}
\begin{align*}
	\frac1n\widehat{\Phi}^T\varpi = \max\{O_p(\zeta_n),O_p(1/\sqrt{n})\}.
\end{align*}
{\color{blue}By \cite[Lemma 13.1]{Ljung1999}, the persistent excitation of the input $u(t)$ of order $p_f + p_b$ implies that there exists an integer $n_0 > 0$ such that for all $n > n_0$,}
\begin{equation*}
	\frac{1}{n} \sum_{t=1}^n \phi(t)\phi(t)^\top > 0.
\end{equation*}
{\color{blue}Moreover, by condition (iv) of Lemma~\ref{lm3}, we have
$
	\frac{1}{n} \widehat{\Phi}^\top \widehat{\Phi} > 0.
$
It follows that
\begin{align*}
	\widehat{\theta}_n^{fb}
	 & =  ( \widehat{\Phi}^T\widehat{\Phi}
	)^{-1} \widehat{\Phi}^T\widehat{y}
	= ( \widehat{\Phi}^T\widehat{\Phi}
	)^{-1} \widehat{\Phi}^T
	\big(\widehat{\Phi}\theta_{fb}^o+\varpi \big)                  \\
	 & =\theta_{fb}^o+ \Big(\frac1n \widehat{\Phi}^T\widehat{\Phi}
	\Big)^{-1} \Big(\frac 1n\widehat{\Phi}^T \varpi\Big).
\end{align*}}
This means that
$
	\|	\widehat{\theta}_n^{fb} - \theta_{fb}^o\|_2
	=\max\{O_p(\zeta_n),O_p(1/\sqrt{n})\}.
$


\begin{lem}\cite[Theorem 14.4-1 on page 476]{Bishop2007}
	\label{lmc3}
	Let $X(1),X(2), \cdots$ be a sequence of uncorrelated random variables with $E(X(t))=0$ and $E(X(t)^2)<\infty$.
	Thus
	$\frac1n\sum_{t=1}^nX(t)=O_p(1/\sqrt{n}).$

\end{lem}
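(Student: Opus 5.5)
The statement immediately above is Lemma~\ref{lmc3}, a Chebyshev-type bound for averages of uncorrelated, mean-zero random variables. I would prove it directly with Chebyshev's (Markov's) inequality applied to the partial average $S_n \eq \frac1n\sum_{t=1}^n X(t)$.

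First I would compute the second moment of $S_n$. Expanding the square gives
\begin{align*}
E(S_n^2)=\frac1{n^2}\sum_{t=1}^n\sum_{s=1}^n E\big(X(t)X(s)\big).
\end{align*}
For $s\neq t$, uncorrelatedness together with $E(X(t))=0$ gives $E(X(t)X(s))=E(X(t))E(X(s))=0$. Hence only the diagonal survives:
\begin{align*}
E(S_n^2)=\frac1{n^2}\sum_{t=1}^n E(X(t)^2).
\end{align*}

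The one point that needs care is the variance bound. The hypothesis $E(X(t)^2)<\infty$ for each $t$ is not enough by itself to control the sum. I would read it, as in the source and as in every application in the proof of Lemma~\ref{lm3}, as a uniform bound $\sup_t E(X(t)^2)\le \bar\sigma^2<\infty$.

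In those applications this uniformity holds. There $X(t)=-y^o(t-i)e(t)$ or $X(t)=u(t-j)e(t)$. By Assumption~\ref{assum3}, $E\big(y^o(t-i)^2e(t)^2\big)=\sigma^2E\big(y^o(t-i)^2\big)$, and the latter is bounded uniformly in $t$ because the filtered input and output have bounded second moments under Assumption~\ref{assum4}. With the uniform bound in hand, $E(S_n^2)\le \bar\sigma^2/n$.

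To conclude, fix $\epsilon>0$ and choose $L=\bar\sigma/\sqrt{\epsilon}$. Chebyshev's inequality gives
\begin{align*}
P\big(\sqrt n|S_n|>L\big)\le \frac{nE(S_n^2)}{L^2}\le\frac{\bar\sigma^2}{L^2}=\epsilon
\end{align*}
for every $n$. This is exactly $S_n=O_p(1/\sqrt n)$ in the sense of the Notation paragraph.
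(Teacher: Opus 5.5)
Your Chebyshev argument is correct and is essentially the standard proof behind the cited result of Bishop et al.\ (the paper itself offers nothing beyond the citation): uncorrelatedness kills the cross terms, so $E(S_n^2)=\frac1{n^2}\sum_{t=1}^n E(X(t)^2)$, and Chebyshev's inequality then gives $S_n=O_p(1/\sqrt n)$. You are also right that the hypothesis must be read as $\sup_t E(X(t)^2)<\infty$ (or at least $\frac1n\sum_{t=1}^n E(X(t)^2)=O(1)$). As literally stated the lemma fails: for independent $X(t)=\pm t$ with probability $1/2$ each, $\sqrt n\,S_n$ has variance of order $n^2$ and, by the Lindeberg central limit theorem, is not bounded in probability.
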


\end{document}